\newcommand{\complex}{\mathscr{C}}
\newcommand{\surf}{\mathscr{S}}
\newcommand{\cccn}{\textsc{Color-Constrained Crossing Problem}}
\def\ca#1{\mathcal{#1}}
\def\OO{{\ca O}}
\def\cf#1{\mathscr{#1}}
\def\DD{{\cf D}}
\def\DxC#1{\textsc{#1-Crossing}}
\def\DC{\DxC{$\DD$}}
\def\DCP{\textsc{\DC{} Problem}}
\def\DCREM{\textsc{\DCP{} with Vertex Splits and Edge Removals}}
\def\DCROT{\textsc{\DCP{} with Flippable Rotation System}}
\definecolor{darkgreen}{rgb}{0,0.3,0}
\newcommand{\Prob}[3]{\vskip3pt%
\par\noindent\hspace*{-\fboxsep}\hspace*{-\fboxrule}%
\begin{nolinenumbers}%
\fbox{%
\begin{minipage}{\linewidth}
{\large\sc#1}\\
{\bf Input:\enspace}{#2}\\
{\bf Question:\enspace}{#3}
\end{minipage}%
}\hspace*{-\fboxsep}\hspace*{-\fboxrule}%
\end{nolinenumbers}%
\vskip3pt}
\title{A Unified FPT Framework for Crossing Number Problems}
\author{\'Eric Colin de Verdi\`ere}{LIGM, CNRS, Univ Gustave Eiffel, F-77454 Marne-la-Vall\'ee, France}{eric.colin-de-verdiere@univ-eiffel.fr}{}{Part of this work was funded by ANR project SUGAR (ANR-25-CE40-0416).}
\author{Petr Hlin\v{e}n\'y}{Masaryk University, Faculty of Informatics, Brno, Czech republic}{hlineny@fi.muni.cz}{https://orcid.org/0000-0003-2125-1514}{Part of this work was done while this author was invited professor at LIGM, Marne-la-Vallée, supported by Paris-Est Sup.}
\authorrunning{\'E.~Colin de Verdi\`ere and P.~Hlin\v{e}n\'y}
\keywords{computational geometry; fixed-parameter tractability; graph drawing; graph embedding; crossing number; two-dimensional simplicial complex; surface}
\begin{document}
\maketitle

\begin{abstract}
  The basic (and traditional) \emph{crossing number problem} is to determine the minimum number of crossings in a topological drawing of an input graph in the plane. We develop a unified framework yielding fixed-parameter tractable (FPT) algorithms for many generalized crossing number problems.

  Our framework takes the following form.  We fix a surface~$\surf$ and a class $\DD$ of ``allowed'' topological drawings of graphs in~$\surf$ (e.g., some class of drawings with at most $t$ crossings).  We assume that testing membership in~$\DD$ can be done algorithmically, and that restricting a drawing in~$\DD$, extending it without adding any crossing, or transforming it with a self-homeomorphism of~$\surf$ yields a drawing that is also in~$\DD$.  Then deciding whether an input graph~$G$ has a drawing in~$\DD$, and computing one if it is the case, is fixed-parameter tractable in (essentially) the genus of~$\surf$ and the maximum number of crossings in a drawing in~$\DD$.  More generally, we may take as input an edge-colored graph and distinguish crossings by the colors of the involved edges; and we may allow a bounded number of edge removals and vertex splits on~$G$ before drawing it.  The proof is a reduction to the embeddability of a graph on a two-dimensional simplicial complex.

  This implies, in a unified way, linear or quadratic FPT algorithms for many topological crossing number variants established in the graph drawing community.  Some of these variants already had previously published FPT algorithms, mostly relying on Courcelle's metatheorem, but our algorithms have a better runtime.  Moreover, our framework extends to these crossing number variants in any fixed surface, and also allows us to fix the rotation system of the drawing of a graph in some variants.
\end{abstract}

\section{Introduction}

The (traditional) \emph{crossing number} problem, minimizing the number of pairwise edge crossings in topological drawings of an input graph in the plane,
is a long-standing and central task in graph drawing and visualization that comes in many established flavors; see the extensive dynamic survey by Schaefer~\cite{DBLP:journals/combinatorics/SchaeferDS21}.
In this paper, we give a framework that smoothly captures many crossing numbers variants and provides, in a unified way, efficient algorithms when parameterized by the solution value (i.e., the respective crossing number).

\subparagraph{Many flavors of crossing numbers.}

There is currently a surge of crossing number variants.  A motivation for such variations is that, in practical drawing applications, not every crossing or crossing pattern of edges may be ``equal'' to other ones.  
One may, e.g., want to avoid mutual crossings of important edges.  Or, to allow crossings only within specific parts of a graph, and not between unrelated parts.  Or, to exclude crossings of edges of some type.  Or, to allow only certain ``comprehensible'' crossing patterns in order to help visualize the graph.  
This is the main focus of the recent research direction called ``beyond planarity''~\cite{DBLP:journals/csur/DidimoLM19}.

Paraphrasing Schaefer~\cite[Chapter~2]{DBLP:journals/combinatorics/SchaeferDS21}, a crossing number variant minimizes, over all \emph{allowed drawings}~$D$ of an input graph~$G$ in some specified \emph{host surface}, an \emph{objective function} related to the crossings in~$D$.  The \emph{allowed drawings} may, e.g., restrict the crossings on one edge or in a local pattern of crossing edges, or prescribe or forbid certain local properties of the drawing (also depending on types or colors of edges).  The \emph{host surface} is usually the plane, but can be any fixed surface, orientable or not.  The \emph{objective function} is often the number of crossings between edges, but other possibilities include, e.g., the number of edges involved in crossings, or the number of pairs of edges that cross an odd number of times.

\subparagraph{Existing algorithms for the traditional version.}

One usually considers the decision version of the (traditional) crossing number problem: Given an input graph~$G$ of size~$n$, and~an~integer~$t$, does $G$ have crossing number at most~$t$?
This problem is already \NP-hard in the plane, as proved by Garey and Johnson in 1983~\cite{GareyJ83}, 
and even in very specific cases~\cite{Hlineny06,DBLP:journals/corr/abs-2406-18933,DBLP:journals/siamcomp/CabelloM13}.  Moreover, the problem is~\APX-hard~\cite{Cabello13}, and the best known polynomial-time approximation algorithm~\cite{DBLP:conf/stoc/ChuzhoyT22} gives an approximation factor that is subpolynomial in~$n$ only for bounded-degree graphs.

Thus, the main current focus is on algorithms that are fixed-parameter tractable (\FPT) in~$t$ --- the runtime has the form $f(t)\cdot\poly(n)$, where $f$ is a computable function of the \emph{parameter}~$t$.  For fixed~$t$, Grohe~\cite{DBLP:journals/jcss/Grohe04} has described an $\OO(n^2)$-time algorithm, and Kawarabayashi and Reed~\cite{DBLP:conf/stoc/KawarabayashiR07} have announced an $\OO(n)$-time algorithm.  Both rely on Courcelle's metatheorem~\cite{Courcelle90}, which automatically entails a huge dependency in the parameter~$t$.

Two recent approaches avoid resorting to Courcelle's theorem.  First, Colin de Verdi\`ere, Magnard, and Mohar~\cite{cmm-egtds-22} have studied the problem of embedding a graph in a \emph{two-dimensional simplicial complex} (2-complex for short), a topological space obtained from a surface by adding isolated edges and identifying vertices.  Among other motivations for this problem, they observe~\cite[Introduction]{cmm-egtds-22} that the crossing number problem reduces to the embeddability of the input graph in a certain 2-complex depending only on~$t$.  Then Colin de Verdi\`ere and Magnard~\cite{cm-faeg2-26} have shown that the embeddability of a graph of size~$n$ in a 2-complex of size~$C$ can be tested in time $2^{\poly(C)}\cdot n^2$, or in time $f(C)\cdot n$ for some function~$f$.  Hence, this results in $2^{\poly(t)}\cdot n^2$-time and $f(t)\cdot n$-time algorithms for the crossing number problem, which, moreover, extend to any fixed surface.

Second, even more recently, Lokshtanov, Panolan, Saurabh, Sharma, Xue, and Zehavi~\cite{Cr-SODA25} have given a $2^{\OO(t\log t)}\cdot n\,$-time algorithm for the traditional crossing number problem in the plane, by a reduction to bounded treewidth and dynamic programming, \mbox{all in time linear in~$n$}.

\subparagraph{Existing algorithms for other variants.}

For other flavors of crossing numbers, which are (typically) also \NP-hard, the literature on FPT algorithms is scarce; besides Pelsmajer, Schaefer,~and~\v{S}tefankovi\v{c} \cite{DBLP:conf/gd/PelsmajerSS07a} giving \FPT\ algorithms for the odd and pair crossing numbers, we are only aware of two very recent papers, which bring a general approach to multiple crossing number flavors and which we now detail.  (The various flavors of crossing numbers are defined in \Cref{sec:consequences}.)

First, M\"unch and Rutter \cite{patterncrossing24}, extending Grohe's approach \cite{DBLP:journals/jcss/Grohe04}, provide a framework for quadratic \FPT\ algorithms for the crossing number of several types of beyond-planar drawings of graphs in the plane, characterized by forbidden combinatorial crossing patterns. This includes the crossing number of $k$-planar, $k$-quasi-planar, min-$k$-planar, fan-crossing, and fan-crossing free drawings of a graph for any constant~$k$.

Second, Hamm, Klute, and Parada~\cite{DBLP:journals/corr/abs-2412-13092}, in a subsequent recent preprint, have announced a generalization of the previous framework~\cite{patterncrossing24}, also handling forbidden topological crossing patterns in drawing styles in the plane and, more importantly, bringing the possibility of handling predrawn parts of the input graph (as already known for the traditional crossing number~\cite{DBLP:conf/compgeom/HammH22}).  The dependency in the size~$n$ of the input graph is an unspecified polynomial.

The above results all rely on Courcelle's theorem.  Apart from that, we are only aware of sporadic results for isolated variants.  Kawarabayashi and Reed~\cite{DBLP:conf/stoc/KawarabayashiR07} and Jansen, Lokshtanov, and Saurabh~\cite{jls-nopa-14} both claim, without details, a linear \FPT\ algorithm to compute the skewness of a graph, and N{\"{o}}llenburg, Sorge, Terziadis, Villedieu, Wu, and Wulms~\cite{DBLP:conf/gd/NollenburgSTVWW22} give a non-uniform \FPT\ algorithm for the splitting number in surfaces.
All mentioned papers except \mbox{N{\"{o}}llenburg et\,al.~are restricted to the plane}.

\subparagraph{New contributions.}
Our primary contribution lies in developing a general and easily describable framework that unifies a large part of the many existing variants of the crossing number problem.
Our framework takes the following form.  We fix a surface~$\surf$ and a class~$\DD$ of (possibly edge-colored) topological drawings of graphs on~$\surf$, corresponding to the class of ``allowed'' drawings.
We assume that testing membership in~$\DD$ can be done algorithmically and that restricting a drawing in~$\DD$, extending it without adding any crossing, or transforming it with a self-homeomorphism of~$\surf$ yields a drawing that is also in~$\DD$.  For example, $\DD$ could be the class of drawings of a given drawing style with at most~$t$ crossings, for some fixed integer~$t$.  

We prove that deciding whether an input (possibly edge-colored) graph~$G$ has a drawing in~$\DD$, and computing one if it is the case, reduces to (several) tests of embeddability of graphs on 2-complexes, discussed above~\cite{cm-faeg2-26}.  As a consequence, it can be done in quadratic time in the size~$n$ of~$G$, and exponential in a polynomial in the other parameters, namely the genus of~$\surf$, the number of colors, and the maximum number~$r$ of crossings in a drawing in~$\DD$; alternatively, the decision problem can be solved in time linear in~$n$ if the other parameters are fixed.  All of this remains true if one allows a bounded number of (possibly color-restricted) edge removals and vertex splits to~$G$ before drawing it; these numbers are also parameters.

We deduce linear \FPT\ algorithms for many established crossing number variants; we now survey some of them.
First, we can restrict ourselves to many established drawing styles, such as $k$-planar, $k$-quasi-planar, min-$k$-planar, $k$-gap, fan-crossing, weakly fan-planar, strongly fan-planar, $k$-cover planar, $k$-matching planar, and fan-crossing free drawings. In this setting, $k$ is formally an additional fixed parameter, but it does not affect the runtime since we can always assume $k\leq t$.  
For the cases of the $k$-gap, $k$-cover planar, and $k$-matching planar crossing numbers, no \FPT\ algorithms were known. 
Second, we may assign colors to the edges of the input graph and count the crossings differently depending on the colors of the edges involved, leading to the first \FPT\ algorithms for the joint crossing number on surfaces and its generalizations.
In particular, color-based restrictions on the crossings can be used to fix the rotation system of a graph (the cyclic ordering of the edges around each vertex), a property that is not yet addressed by any of the existing \FPT\ algorithms.
Third, assuming suitable additional properties, we can handle problems that do not count the crossings, but rather the number of edges, or pairs of edges, involved in crossings, such as the edge, pair, and odd crossing numbers (no \FPT\ algorithm was known for the edge crossing number).
And fourth, since we allow for prior edge removals and vertex splits, our framework encompasses, e.g., the skewness, for which \FPT\ algorithms were only sketched~\cite{DBLP:conf/stoc/KawarabayashiR07,jls-nopa-14} in the plane, and the splitting number, which was only known to admit a nonuniform \FPT\ algorithm~\cite{DBLP:conf/gd/NollenburgSTVWW22}.

Last but not least, all these four aspects that define flavors of crossing numbers can be combined arbitrarily, and our main results are valid not only in the plane, but for arbitrary surfaces.
As a side note, we allow non-orientable surfaces in our framework, homeomorphic to a sphere with a number of ``crosscaps''.  Placing $k$~crosscaps in the plane amounts to choosing $k$ specific points in the plane that can be traversed by an arbitrary number of pairwise crossing edges of the planar drawing ``for free'', without accounting for the induced crossings at these points, which seems relevant from a graph drawing perspective.
Also, we allow surfaces with boundary, which are needed for some subtle variations of the fan-planar crossing number.

For most flavors of crossing number, the arguments are direct, though they really differ according to the flavor; they essentially boil down to checking that the class~$\DD$ of drawings satisfies the desired conditions, and to giving an algorithm to test membership in~$\DD$.

\subparagraph{Comparison with the state of the art.}

Our main contribution is a convenient, versatile, and unified framework capturing a very general class of crossing number variants while giving \FPT\ algorithms.  In particular, as listed above, we obtain new \FPT\ results for several~established variants and their generalizations (to surfaces, or by combining the features of multiple variants).  
It is conceivable that some of the variants newly covered by our framework could be proved \FPT\ using other techniques, perhaps Courcelle's powerful metatheorem, as already used in \cite{DBLP:journals/jcss/Grohe04,DBLP:conf/compgeom/HammH22,patterncrossing24,DBLP:journals/corr/abs-2412-13092}.  However, such developments are extremely delicate even in isolated variants and lead to hard-to-describe algorithms.  Indeed, first, checking a given drawing and counting its crossings must be formalized in the MSO$_2$ logic of graphs, which usually needs heavy tricks tailored to the specific case; second, there remains the specific and often highly nontrivial treewidth reduction step to be done.  So far, such approaches have been successful only in the case of the plane.

Moreover, all approaches using Courcelle's theorem inherently come with a high computati\-onal cost.  Indeed, they result in algorithms with a huge multi-level exponential dependency in the parameter~$r$; for instance, it is an exponential tower of height at least four in the case of \cite{DBLP:journals/jcss/Grohe04} and three in the case of \cite{DBLP:conf/stoc/KawarabayashiR07}, see the discussion in~\cite{Cr-SODA25}.  Also, while Courcelle's theorem itself runs in linear time in the size~$n$ of the input graph, the necessary treewidth reduction step requires quadratic time, if not more, in the known approaches for the non-traditional crossing numbers.  In contrast, we obtain quadratic \FPT\ algorithms with a singly exponential dependency in the parameters, and alternative linear \FPT\ algorithms with unspecified dependency in the parameters.  
Furthermore, any improved algorithm for the embeddability problem~\cite{cm-faeg2-26} (or to the irrelevant vertex algorithm by Golovach, Kolliopoulos, Stamoulis, and Thilikos~\cite{gkst-fivlt-25}, used in the linear-time algorithm) would automatically speed up our algorithms.

The frameworks used in the two very recent works mentioned above~\cite{patterncrossing24,DBLP:journals/corr/abs-2412-13092}, which both use treewidth reduction and Courcelle's theorem, can handle some of the drawing styles that we encompass, but different counting functions, vertex splits, and nonplanar surfaces are out of their reach.
However, it should not be difficult to extend their framework to handle edge-colored graphs, thanks to the MSO$_2$ logic being able to handle arbitrary sets of edges.  On the other hand, the recent breakthrough of Lokshtanov et al.~\cite{Cr-SODA25} for the traditional crossing number problem may fuel hope for more $2^{\OO(t\log t)}\cdot n$-time algorithms, but tweaking all ingredients of that highly technical paper seems hard, even for isolated crossing number variants, and moreover this approach is inapplicable to surfaces other than the plane, due to the use of several inherently planar techniques (3-connectivity arguments and dependency~on~\cite{jls-nopa-14}).

\subparagraph{Organization of the paper.}

Starting with the preliminaries (\Cref{sec:prelim}), we describe our general framework and state our main result (\Cref{sec:metasection}), which is then proved (\Cref{sec:proofmeta}).  After that, we instantiate this framework to many crossing number variants, and list the resulting FPT algorithms (\Cref{sec:consequences}).  However, some crossing number variants that fix the rotation system of the input graph require more delicate arguments; we present another general tool, also deduced from our main result, and apply it to several crossing number problems in which the rotation system is fixed (\Cref{sec:metasection-rot}).  Finally, we show how to compute drawings corresponding to positive instances (\Cref{sec:compute}) and conclude (\Cref{sec:concl}).

\section{Preliminaries}\label{sec:prelim}

\subparagraph{Graphs and surfaces.}

In this paper, graphs are finite and undirected, but not necessarily simple unless specifically noted. The \emph{size} of a graph $G$ is the number of vertices plus the number of edges of~$G$.  A \emph{vertex split} of a graph~$G$ at vertex~$v$ creates a new vertex~$v'$ and replaces some of the edges incident to~$v$ by making them incident to~$v'$ instead of~$v$.

We follow~\cite{a-bt-83,mt-gs-01} for surface topology.  A \emph{surface}~$\surf$ is a topological space obtained from finitely many disjoint solid, two-dimensional triangles by identifying some of their edges in pairs.  Surfaces are not necessarily connected.  The \emph{boundary} of~$\surf$ is the closure of the union of the unidentified edges.  The \emph{(Euler) genus} of a connected surface is twice its number of ``handles'', if it is orientable, and is its number of ``crosscaps'', otherwise.
Up to homeomorphism, each \emph{connected} surface~$\surf$ is specified by whether it is orientable or not, by its (Euler) genus, and by its number of boundary components.  In this paper, by `genus' we always mean the Euler genus.
The \emph{topological size} of~$\surf$ equals $s:=d+g+b$, where $d$ is the number of connected components of~$\surf$, $g$ is the sum of the (Euler) genera of its connected components, and $b$ is the number of boundary components.  Algorithmically, a surface (up to a homeomorphism) can be specified by the genus, number of boundary components, and orientability of each connected component.  The plane is not a surface according to our definition, but in this paper it can always be substituted with a disk, which is (homeomorphic to) a surface.  A \emph{self-homeomorphism} of~$\surf$ is just a homeomorphism from~$\surf$ to~$\surf$. 

\subparagraph{(Topological) drawings of graphs.}

In this paper, any drawing of any graph in any topological space maps distinct vertices to distinct points, and has finitely many intersection points, each avoiding the images of the vertices; however, an intersection point may involve two or more pieces of edges.  In more formal terms, and introducing more terminology, the previous sentence means the following.
A \emph{curve}~$c$ in a topological space~$\cal X$ is a continuous map from the closed interval $[0,1]$ into~$\cal X$.  The \emph{relative interior} of~$c$ is the image, under~$c$, of the open interval~$(0,1)$.  In a {\em drawing}~$D$ of a graph $G$ in a topological space~$\cal X$, vertices are represented by points of~$\cal X$ and edges by curves in~$\cal X$ such that the \emph{endpoints} $c(0)$ and~$c(1)$ of a curve~$c$ are the images of the end vertices of the corresponding edge.  Moreover, we assume that distinct vertices are mapped to distinct points and that the relative interiors of the edges avoid the images of the vertices.  Given a point $x\in\cal X$, its \emph{multiplicity} in~$D$ is the number of pairs $(c,t)$ such that $c(t)=x$, where $c$ is a curve representing an edge of~$G$ in~$D$, and $t\in(0,1)$.  An \emph{intersection point} (shortly an \emph{intersection}) of~$D$ is a point with multiplicity at least two.  We only consider drawings~$D$ with finitely many intersection points.

When considering a drawing~$D$ of a graph~$G$ on a \emph{surface}~$\surf$, we \emph{always} implicitly assume that $G$ has no isolated vertex and that the image of~$D$ avoids the boundary of~$\surf$; these conditions are actually benign in a graph drawing context.  In~$D$, each edge~$e$ is subdivided into (finitely many) \emph{pieces} by the intersection points along~$e$.  A \emph{crossing} is an intersection of multiplicity two that cannot be removed by a local perturbation of the curves.  The other intersections of multiplicity two are \emph{tangencies}.  The drawing~$D$ is \emph{normal}~\cite{schaefer2017crossing} if every intersection point is a crossing (in particular, it has multiplicity two).  Moreover, $D$ is \emph{simple} if it is normal, no edge self-crosses, no two adjacent edges cross, and no two edges cross more than once.%
\footnote{Note that many authors do not consider a drawing $D$ \emph{simple} if $D$ contains two parallel edges (even uncrossed), as two parallel edges necessarily intersect at their two endpoints. Likewise, uncrossed loops are usually not allowed in simple drawings. We do not make these additional restrictions here, for reasons that will be apparent in the proof of \Cref{prop:kplanar-etc}.}

The (traditional) \emph{crossing number} of a graph $G$ in a surface $\surf$ is the least number of crossings over all normal drawings of $G$ in~$\surf$.  Many variations exist~\cite{DBLP:journals/combinatorics/SchaeferDS21}, see \Cref{sec:consequences}.

\subparagraph{Storing drawings on surfaces.}\label{sec:storing}

We need to represent drawings of graphs on a surface~$\surf$ combinatorially, up to self-homeomorphisms of~$\surf$.  Cellular embeddings can be represented by combinatorial maps~\cite{e-dgteg-03}, but we need to allow intersection points, and moreover faces need not be disks.  We can achieve this, ultimately relying on combinatorial maps.
Here are the details.

We represent drawings on surfaces as subsets of the vertex-edge graph of some triangulation of~$\surf$.  (An alternative possibility would be through \emph{extended combinatorial maps}~\cite[Section~2.2]{cm-tgis-14}, but this appears to be unnecessarily complicated for our purposes.)

Specifically, a \emph{triangulation} of a surface~$\surf$ is a graph~$T$ embedded on~$\surf$ such that each face of~$T$ (each connected component of the complement of the image of~$T$) is homeomorphic to a disk bounded by three edges, and all three incident vertices and edges are pairwise distinct.  In particular, a triangulation is a 2-complex.  Let $T'$ be the graph made of the vertices and edges of~$T$ that lie in the interior of~$\surf$.  Let $W=(w_1,\ldots,w_k)$ be a set of $k$~walks in~$T'$, such that each edge of~$T'$ is used at most once by the union of these walks.  This set~$W$ naturally gives rise to a drawing~$D$ of a graph~$G$ on~$\surf$: The graph~$G$ has $k$~edges, each drawn as $w_1,\ldots,w_k$ on~$\surf$; the vertices of~$G$ are the at most~$2k$ endpoints of the walks in~$W$.  We say that the pair $(T,W)$ is a \emph{representation} of~$D$.  

A representation of a colored drawing is a representation $(T,W)$ of the corresponding uncolored drawing, together with the data of a color for each walk in~$W$.  Its \emph{size} is the size of a data structure to represent it, which is $\OO(tc)$, where $t$ is the number of triangles and $c$ is the number of colors (assuming we store the colors in unary).  Conversely (as proved in Lemma~\ref{lem:represent-drawings} below), every drawing~$D$ on~$\surf$ can be represented by such a pair~$(T,W)$ in which $T$ is made of $\OO(s+u)$ triangles, where $s$ is the topological size of~$\surf$, and the intersection points of~$D$ subdivide the edges of~$D$ into $u$ pieces in total.

\subparagraph{2-complexes.}

\begin{figure}
    \centering
    \def\svgwidth{\linewidth}
\begingroup%
  \makeatletter%
  \providecommand\color[2][]{%
    \errmessage{(Inkscape) Color is used for the text in Inkscape, but the package 'color.sty' is not loaded}%
    \renewcommand\color[2][]{}%
  }%
  \providecommand\transparent[1]{%
    \errmessage{(Inkscape) Transparency is used (non-zero) for the text in Inkscape, but the package 'transparent.sty' is not loaded}%
    \renewcommand\transparent[1]{}%
  }%
  \providecommand\rotatebox[2]{#2}%
  \newcommand*\fsize{\dimexpr\f@size pt\relax}%
  \newcommand*\lineheight[1]{\fontsize{\fsize}{#1\fsize}\selectfont}%
  \ifx\svgwidth\undefined%
    \setlength{\unitlength}{294.24751025bp}%
    \ifx\svgscale\undefined%
      \relax%
    \else%
      \setlength{\unitlength}{\unitlength * \real{\svgscale}}%
    \fi%
  \else%
    \setlength{\unitlength}{\svgwidth}%
  \fi%
  \global\let\svgwidth\undefined%
  \global\let\svgscale\undefined%
  \makeatother%
  \begin{picture}(1,0.38338202)%
    \lineheight{1}%
    \setlength\tabcolsep{0pt}%
    \put(0,0){\includegraphics[width=\unitlength,page=1]{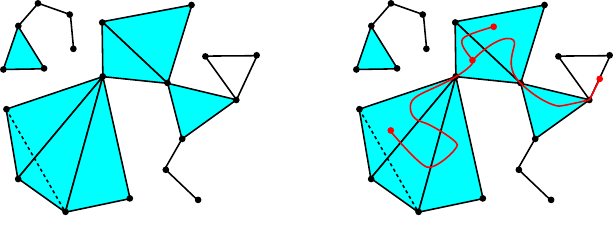}}%
    \put(0.1641941,0.00896089){\color[rgb]{0,0,0}\makebox(0,0)[lt]{\lineheight{1.49999988}\smash{\begin{tabular}[t]{l}(a)\end{tabular}}}}%
    \put(0.73923162,0.00896089){\color[rgb]{0,0,0}\makebox(0,0)[lt]{\lineheight{1.49999988}\smash{\begin{tabular}[t]{l}(b)\end{tabular}}}}%
  \end{picture}%
\endgroup%

    \caption{(a): An example of a two-dimensional simplicial complex, or 2-complex for short. (b): An embedding of some graph with four vertices and three edges on the same 2-complex.}
    \label{fig:2complex}
\end{figure}

In this paper, a \emph{$2$-complex}~$\complex$ (or two-dimensional simplicial complex) is a topological space obtained from a simple graph (without loops or multiple edges) by attaching solid, two-dimensional triangles to some of its cycles of length three; see \Cref{fig:2complex}.%
\footnote{Our definition of 2-complex slightly departs from the standard one; it is the same as a geometric simplicial complex of dimension at most two, realized in some ambient space of dimension large enough.} %
The \emph{simplices} of~$\complex$ are its vertices, edges, and triangles.
We can easily represent 2-complexes algorithmically, by storing the vertices, edges, and triangles, and the incidences between them.  In general, many such representations correspond to the same topological space, and the choice of the representation only impacts the complexity analysis of our algorithms.  The class of 2-complexes is quite general; it contains all graphs, all surfaces, all $k$-books (although in this paper, we only need 2-complexes in which every edge is incident to at most two triangles), and any space obtained from a surface by identifying finitely many finite subsets of points and by adding finitely many edges between any two points.  

An \emph{isolated edge} of a 2-complex~$\complex$ is an edge incident to no triangle.  The \emph{surface part} of~$\complex$ is the union of all its triangles, together with their incident vertices and edges.  A \emph{singular point} of~$\complex$ is a point that has no open neighborhood homeomorphic to an open disk, a closed half-disk, or an open segment.

Our above definition of drawings in topological spaces applies to drawings in 2-complexes.  In particular, vertices of~$G$ may lie anywhere on~$\complex$, and edges of~$G$ as curves may traverse several vertices, edges, and triangles of~$\complex$.  An \emph{embedding} of~$G$ into~$\complex$ is a drawing without any intersection point.

\subparagraph{Embeddability of graphs on 2-complexes. }

The \emph{embeddability problem} takes as input a graph~$G$ and a 2-complex~$\complex$, and the task is to decide whether $G$ has an embedding in~$\complex$.  This problem is fixed-parameter tractable in the size of the input 2-complex:

\begin{theorem}[{Colin de Verdière and Magnard~\cite[Theorems 1.1 and~11.1]{cm-faeg2-26}}]
\label{thm:embedding}
  One can solve the embeddability problem (of graphs in 2-complexes) in $2^{\poly(C)}\cdot n^2$ time or in $f(C)\cdot n$ time, where $C$ is the number of simplices of the input $2$-complex, $n$ is the size of the input graph, and $f$ is some function.
\end{theorem}

\section{The framework: description and main result}\label{sec:metasection}

Let $c$ be a positive integer and $\surf$ a surface.  We consider \emph{colored} graphs, in which each \emph{edge} is labeled with an integer in~$\{1,\ldots,c\}$.  A \emph{colored drawing} is a drawing~$D$ of a colored graph~$G$ in~$\surf$ in which each edge of~$D$ inherits the color of the corresponding edge of~$G$.

Let $\DD$ be a class of colored drawings on~$\surf$.  (Thus $\DD$ implicitly depends on~$c$ and~$\surf$.)  We say that $\DD$ is \emph{stable} if, for any $D\in\DD$, any drawing obtained from~$D$ by any of the following operations is also in~$\DD$:
  \begin{enumerate}[(a)]
  \item the removal of a vertex or an edge;
  \item the addition of a vertex whose image is disjoint from the image of~$D$ (and from the boundary of~$\surf$);
  \item the addition of an edge, of an arbitrary color, connecting two vertices in~$D$, and otherwise disjoint from the image of~$D$ (and from the boundary of~$\surf$);
  \item a self-homeomorphism of~$\surf$.
  \end{enumerate}
  For example, for any integer~$t$, the class of normal drawings with at most $t$ crossings is stable.

\smallskip
  For a stable class~$\DD$ of colored drawings, we introduce the following problem:
\Prob{\DCP{}}%
{a positive integer $c$; a colored graph~$G$ with colors in~$\{1,\ldots,c\}$}%
{Is there a colored drawing~$D\in\DD$ of the graph~$G$?}

Before proceeding further, we remark that the traditional crossing number problem is an instance of the \DCP{}:
\begin{observation}\label{obs:normalcr}
For every integer $t$, the problem whether a graph $G$ has crossing number at most~$t$ on a surface $\surf$ is a \DxC{$\DD_t$} instance for $c=1$ (a single color) and the stable class~$\DD_t$ of all normal drawings with at most $t$ crossings on $\surf$.
\end{observation}

Our framework is actually much richer than formulated above; for any stable class~$\DD$ of drawings, we introduce the following generalization of the \DCP{}:
\Prob{\DCREM{}}%
{a positive integer $c$; non-negative integers $p$, $q_1,q_2,\ldots,q_c$; a colored graph~$G$ with colors in~$\{1,\ldots,c\}$}%
{Is there a colored drawing~$D'\in\DD$ of a colored graph~$G'$ obtained from~$G$ by performing at most $p$ successive vertex splits and by removing, for $i=1,\ldots,c$, at most $q_i$ edges of color~$i$?}
\smallskip

To successfully tackle these problems in the parameterized setting, we need a suitable parameter bounding the ``richness of crossings'' in a stable class~$\DD$.  The \emph{intersecting size} of a drawing is the sum of the multiplicities of its intersection points.  The \emph{intersecting size} of~$\DD$ is the maximum intersecting size of a drawing in~$\DD$.  For example, if as above $\DD_t$ is the class of normal drawings with at most $t$ crossings, then the intersecting size of~$\DD_t$ is at most $2t$, because by definition each intersection point in a normal drawing is a crossing, which has multiplicity two.

Our main result is that, assuming an algorithm to decide membership in~$\DD$ and an upper bound on the intersecting size of~$\DD$, the \DCREM{} is fixed-parameter tractable (and, as a special case, also the \DCP{}):
\begin{theorem}\label{thm:main}
  Let $c$ be a positive integer.  Let $\surf$ be a surface of topological size~$s$.   Let $\DD$ be a stable class of drawings with colors in~$\{1,\ldots,c\}$ on~$\surf$, specified by an integer $r$ that is an upper bound on the intersecting size of~$\DD$, and by an algorithm that, in time $\OO(\delta(j))$, decides membership in~$\DD$ of a representation of a colored drawing of size at most~$j$.

  Then, the \DCREM{} can be solved in time $2^{\poly(c+s+r+p+q)}\cdot\delta(\OO((s+r)c))\cdot n^2$, or in $f(c+s+r+p+q)\cdot n$ time for some function~$f$, where $n$ is the size of~$G$ and $q:=\sum_iq_i$.
  In particular, the \DCP{} can be solved in time $2^{\poly(c+s+r)}\cdot\delta(\OO((s+r)c))\cdot n^2$, or in time $f(c+s+r)\cdot n$.
\end{theorem}

A few notes on \Cref{thm:main} are in order:
\begin{enumerate}
    \item More precisely, the proof of \Cref{thm:main} shows that an instance of the \DCREM{} can be solved by:
    \begin{itemize}
        \item $(c+s+r)^{\OO(s+r)}$ calls to an oracle that decides membership in~$\DD$, for drawing representations of size $\OO((s+r)c)$, and
        \item 
        $(c+s+r)^{\OO(s+r)}\cdot (s+r+p+q)^{\OO(p+q)}$ tests of embeddability of a graph of size~$\OO(c+r+p+q)n)$ into a 2-complex of size $\OO(s+(c+r+p+q)(r+q))$.
    \end{itemize}
    \item The upper bound~$r$ on the intersecting size of~$\DD$ shows up in the runtime, but is also used by the algorithm and must be provided on the input.  If $\DD$ contains only normal drawings, then we can take $r$ to be twice the maximum number of crossings in a drawing in~$\DD$, as explained above.
    \item In all our applications, the function $\delta(\cdot)$ will be at most exponential, and so the corresponding factor $\delta(\OO((s+r)c))$ (which is independent of the input graph size~$n$) in the runtime can be ignored.
    \item Our algorithms are uniform, in the sense that there is a single quadratic (respectively, linear) algorithm valid for all choices of $c$, $\surf$, $\DD$, $r$, $p$, $q_1,\ldots,q_c$, and~$G$.  The only caveat is that the algorithm that decides membership in~$\DD$ may obviously depend on~$c$, $\surf$, and~$\DD$.
    \item For positive instances of the \DCREM{}, we can compute an actual represen\-tation of the corresponding drawing without overhead compared to the quadratic-time algorithm; see \Cref{thm:explicit}.
\end{enumerate}

\section{Proof of \Cref{thm:main}}\label{sec:proofmeta}

Recall that $\surf$ is a surface, $c$ is a positive integer, $\DD$ is a stable class of of colored drawings, and $r$ is an upper bound on the intersecting size of~$\DD$.

\subsection{Cutting a surface along a drawing}\label{sec:cutting}

Let $D$ be a drawing of a graph~$G$ on~$\surf$.  We first need to define what we mean by \emph{cutting}~$\surf$ along (the relative interior of the edges of)~$D$ (see Figure~\ref{fig:reduc-complex}(a--b)).  Intuitively, we cut $\surf$ along the relative interior of edges of~$D$, resulting in a topological space that is actually homeomorphic to a 2-complex.  Note that each point of~$\surf$ that is an endpoint of some edge of~$D$ still corresponds to a single point on the 2-complex obtained by cutting along~$D$.  More formally, recall from \Cref{sec:storing} that $D$ is represented by a pair $(T,W)$ where $T$ is a triangulation of~$\surf$ and $W$ is a set of walks in the triangulation and in the interior of the surface, such that each edge of~$T$ is used at most once.  The triangulation~$T$ is obtained from a collection of initially disjoint triangles by gluing some vertices and edges together.  Cutting $\surf$ along~$D$ is best described by starting with the same collection of disjoint triangles and specifying a \emph{subset} of these gluing operations:
\begin{itemize}
  \item whenever two directed edges $e$ and~$e'$ of triangles are identified to a single edge in~$T$, and that edge is not used by any walk in~$W$, we identify $e$ and~$e'$.  Note that this process automatically identifies the source endpoints of~$e$ and~$e'$, and similarly for their target endpoints;
  \item whenever two vertices of the resulting triangulation correspond to the same vertex of~$T$, and that vertex is an endpoint of at least one walk in~$W$, we identify these two vertices.
\end{itemize}
Since we only perform a subset of the identifications of vertices and edges in~$T$, the resulting space is naturally a 2-complex.

\subsection{Overview of the reduction}

The proof of \Cref{thm:main} is a parameterized Turing reduction to the embeddability problem on 2-complexes.  
Consider an instance $(G,p,q_1,\ldots,q_c)$ of the \DCREM{}.  We define an uncolored graph $G_2=G_2(G,p,q_1,\ldots,q_c)$ and a set of 2-complexes $\Gamma=\Gamma(p,q_1,\ldots,q_c)$ such that our \DCREM{} instance is positive if and only if $G_2$ embeds in at least one of the 2-complexes in~$\Gamma$ (see a sketch in \Cref{fig:reduc-complex}).  The set~$\Gamma$ is built by branching over a (small enough) set of properties for the hypothetical drawing of a graph~$G'$ (obtained from~$G$ by edge removals and vertex splits as in the problem definition) or, equivalently, by \emph{guessing} some properties of that drawing.

A key definition is the following.  A \emph{fully intersecting} drawing is a drawing without any isolated vertex in which every edge is involved in at least one intersection point (either a self-intersection or an intersection with another edge, but common endpoints do not count).  If a fully intersecting drawing~$D^\times$ has intersecting size at most~$r$, then it is made of at most~$2r$ pieces, because each piece is incident to at least one intersection point, and each intersection point of multiplicity~$k$ is incident to at most $2k$ pieces.

Assume that~$G'$ has a colored drawing~$D'$ in~$\DD$.  Let $D^\times$ be the subdrawing of the subgraph of~$G'$ made of the edges that are involved in at least one intersection in~$D'$, together with their endpoints; it is a fully intersecting drawing, so it has at most $2r$ pieces.  Viewing $D^\times$ as an abstract drawing, without its correspondence with the vertices and edges of~$G'$, this implies that we can enumerate all such colored drawings up to self-homeomorphism of~$\surf$ in time $(c+s+r)^{\OO(s+r)}$.  In other words, we can \emph{guess} the appropriate colored drawing~$D^\times$.

\begin{figure}%
    \def\svgwidth{\linewidth}
    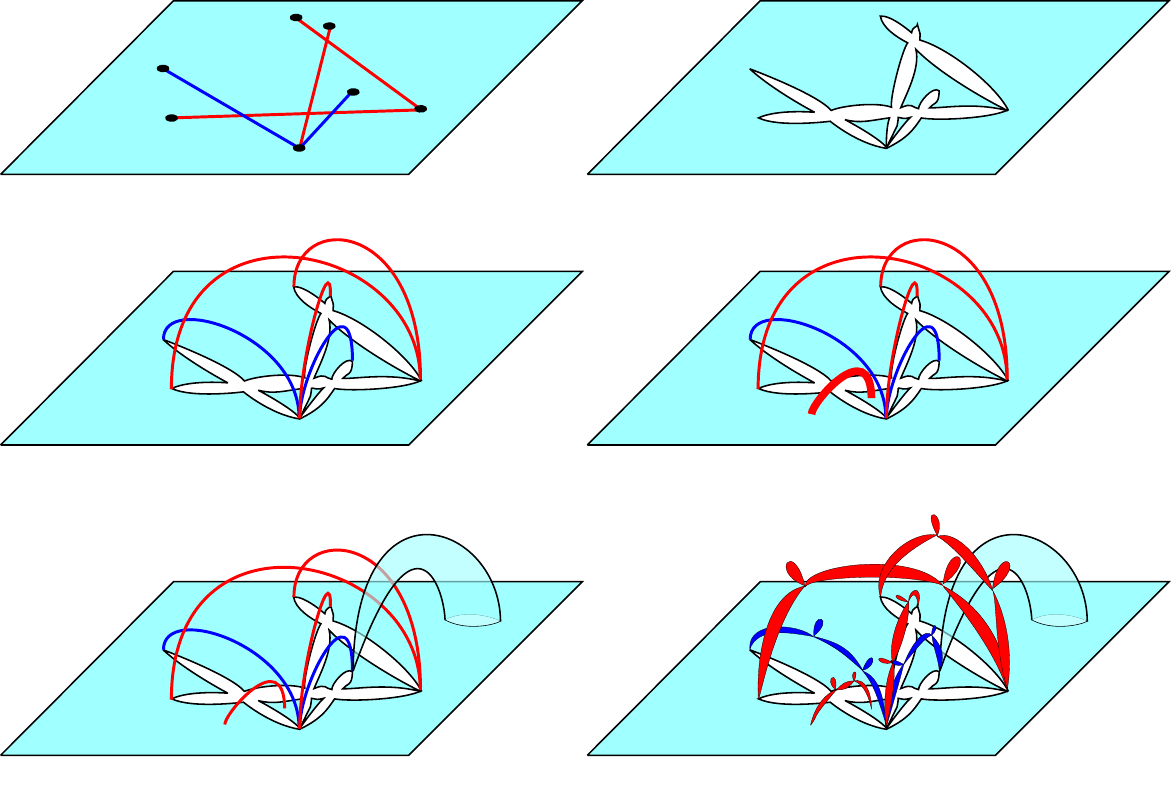
    \caption{The construction of a 2-complex~$\complex_5$ in our reduction.  (a): The surface~$\surf$ and the colored drawing~$D^\times$.  (b): $\complex_1$ is obtained by cutting~$\surf$ along~$D^\times$.  Note that each vertex of~$G$ (in particular, the one in the bottom of the image) on~$\surf$ corresponds to a single point in~$\complex_1$, while each intersection point in~$D^\times$ of multiplicity~$i$ corresponds to~$2i$ points in~$\complex_1$.  (c) $\complex_2$ is obtained by adding isolated edges corresponding to the edges of~$D^\times$; we define the color of each isolated edge to be that of the corresponding edge in~$D^\times$.  (d) $\complex_3$ is obtained by adding $q_i$ isolated edges of color~$i$ for each color~$i$.  Here color~1 is represented in red, color~2 is represented in blue, and $q_1=1$, $q_2=0$.  (e) $\complex_4$ is obtained by iteratively identifying $p$ pairs of points of~$\complex_3$ (here $p=1$).  (f) $\complex_5$ is obtained by replacing each isolated edge of color~$i$ with a necklace (represented figuratively here) of thickness $3r+p+3q+i$ and beads of size $c-i+2$.}
    \label{fig:reduc-complex}
\end{figure}

Subsequently, we cut~$\surf$ along the relative interior of the edges of~$D^\times$ (\Cref{fig:reduc-complex}(b)) and add isolated edges connecting the endpoints of the edges in~$D^\times$ (\Cref{fig:reduc-complex}(c)), obtaining a 2-complex~$\complex_2$ in which $G'$~embeds.  
If $G'$ is allowed to result from~$G$ by edge removals and vertex splits, we modify the 2-complex appropriately, by guessing where the endpoints of the additional isolated edges must be inserted  (\Cref{fig:reduc-complex}(d)) and which points of the 2-complex must be identified back (doing an inverse of the vertex split, \Cref{fig:reduc-complex}(e)).  
Note that these guessed points may be picked from the surface part of $\complex_2$, as well as among the vertices of the edges of $D^\times$ in $\complex_2$.
We obtain a 2-complex~$\complex_4$ in which $G$~embeds.  Each isolated edge of~$\complex_4$ naturally bears
the color of the edge of~$G$ it carries.

\smallskip
Conversely, we would like that an embedding of~$G$ into one of the resulting 2-complexes implies a positive \DCREM{} instance.  We do not exactly achieve this, but need to modify $G$ and the 2-complexes, as follows.
First (\Cref{fig:reduc-graph}(b)), in a minor preprocessing step, we attach a 4-clique to each vertex of~$G$, to ensure that all vertices of~$G$ have degree at least three.
Let $G_1$ be this new graph.  Second, and more importantly  (\Cref{fig:reduc-complex}(f) and~\Cref{fig:reduc-graph}), we need to encode the colors of the edges.
We turn each edge of~$G_1$ and each isolated edge of the $2$-complex~$\complex_4$ (in which $G_1$ embeds, see above) into a \emph{necklace} encoding the color, such that a necklace of a certain encoding type in the resulting graph can only use a necklace of the same type in the resulting 2-complex.  This results in an (uncolored) graph~$G_2$ embedded in a 2-complex~$\complex_5$.  Finally, the set~$\Gamma$ of 2-complexes in which we try to embed~$G_2$ is made of all 2-complexes~$\complex_5$, over all possible choices (guesses) described above.

In more detail, we define a \emph{necklace of thickness~$h$ and beads of size~$k$} to be the graph obtained from a path of length three by (1) replacing each edge with $h$ parallel edges, and (2) attaching $k$ loops to each of the two internal vertices of the path (\Cref{fig:reduc-graph}(c)).
The intuition is that, by the minimum-degree condition on $G_1$, the internal vertices of necklaces of our 2-complex can essentially only be used by internal vertices of the necklaces of $G_2$. So, if edges of color $i$ are encoded with necklaces of, say, thickness $i$ and beads of size $c-i$, we effectively prevent graph necklaces from using 2-complex necklaces~of~different~types.
(We will actually use slightly refined formulas for the thickness and the size of the beads.)

\begin{figure}
    \def\svgwidth{\linewidth}
\begingroup%
  \makeatletter%
  \providecommand\color[2][]{%
    \errmessage{(Inkscape) Color is used for the text in Inkscape, but the package 'color.sty' is not loaded}%
    \renewcommand\color[2][]{}%
  }%
  \providecommand\transparent[1]{%
    \errmessage{(Inkscape) Transparency is used (non-zero) for the text in Inkscape, but the package 'transparent.sty' is not loaded}%
    \renewcommand\transparent[1]{}%
  }%
  \providecommand\rotatebox[2]{#2}%
  \newcommand*\fsize{\dimexpr\f@size pt\relax}%
  \newcommand*\lineheight[1]{\fontsize{\fsize}{#1\fsize}\selectfont}%
  \ifx\svgwidth\undefined%
    \setlength{\unitlength}{486.4864137bp}%
    \ifx\svgscale\undefined%
      \relax%
    \else%
      \setlength{\unitlength}{\unitlength * \real{\svgscale}}%
    \fi%
  \else%
    \setlength{\unitlength}{\svgwidth}%
  \fi%
  \global\let\svgwidth\undefined%
  \global\let\svgscale\undefined%
  \makeatother%
  \begin{picture}(1,0.66956993)%
    \lineheight{1}%
    \setlength\tabcolsep{0pt}%
    \put(0.60481583,0.01178067){\color[rgb]{0,0,0}\makebox(0,0)[lt]{\lineheight{1.49999988}\smash{\begin{tabular}[t]{l}(c)\end{tabular}}}}%
    \put(0.44484248,0.36671368){\color[rgb]{0,0,0}\makebox(0,0)[lt]{\lineheight{1.49999988}\smash{\begin{tabular}[t]{l}$G_2$\end{tabular}}}}%
    \put(0.07302556,0.56557161){\color[rgb]{0,0,0}\makebox(0,0)[lt]{\lineheight{1.49999988}\smash{\begin{tabular}[t]{l}$G$\end{tabular}}}}%
    \put(0.06942132,0.20183468){\color[rgb]{0,0,0}\makebox(0,0)[lt]{\lineheight{1.49999988}\smash{\begin{tabular}[t]{l}$G_1$\end{tabular}}}}%
    \put(0.13383792,0.02036422){\color[rgb]{0,0,0}\makebox(0,0)[lt]{\lineheight{1.49999988}\smash{\begin{tabular}[t]{l}(b)\end{tabular}}}}%
    \put(0.13428958,0.43035123){\color[rgb]{0,0,0}\makebox(0,0)[lt]{\lineheight{1.49999988}\smash{\begin{tabular}[t]{l}(a)\end{tabular}}}}%
    \put(0,0){\includegraphics[width=\unitlength,page=1]{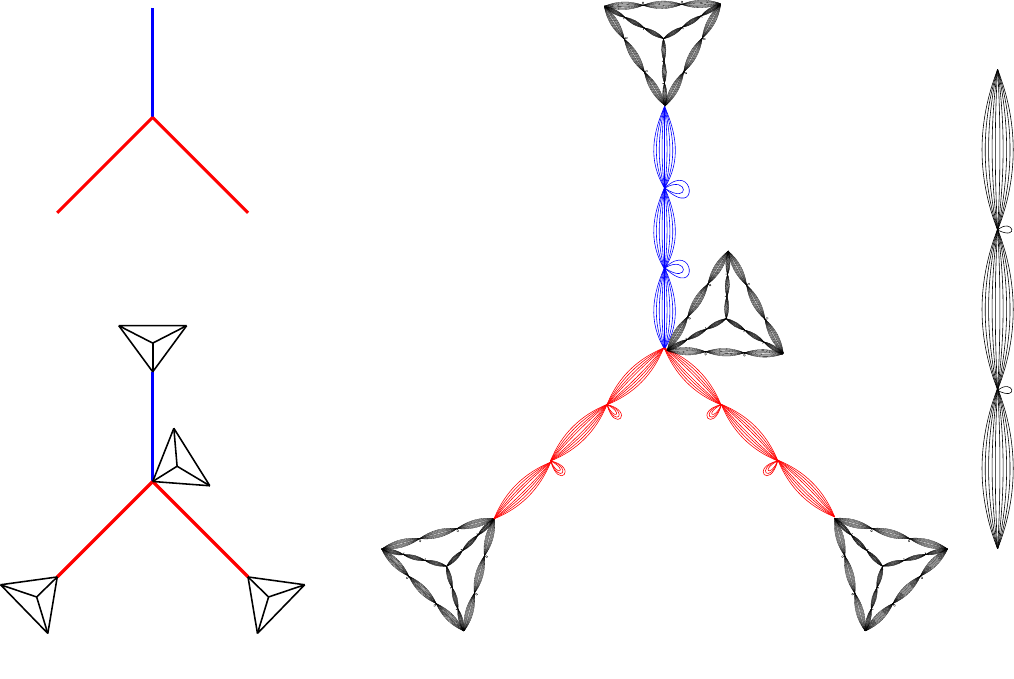}}%
  \end{picture}%
\endgroup%

    \caption{The construction of the graph~$G_2$.  (a): The graph~$G$; the red edges have color~1 and the blue edge has color~2.  (b): The graph~$G_1$.  The new edges, in black, have color~3.  (c): The graph~$G_2$ in the case $c=2$, $r=p=q=1$.  Each edge of color~$i$ is replaced with a necklace of thickness $3r+p+3q+i=7+i$ and beads of size $c-i+2=4-i$.  On the right,  a close-up on the replacement for a black edge in~$G_1$, which is a necklace of thickness ten and beads of size one.}
    \label{fig:reduc-graph}
\end{figure}

\subsection{Formal description of the reduction}\label{sec:formal-reduc}

We now define our reduction precisely.  We first describe the uncolored graph~$G_2=G_2(G,p,q_1,\ldots,q_c)$, which will be the common input to all embeddability instances.  See Figure~\ref{fig:reduc-graph}.
\begin{itemize}
    \item Let $G_1$ be the colored graph obtained from~$G$ by attaching a 4-clique to each vertex~$v$ of~$G$, that is, by adding three new vertices making a 4-clique with~$v$. The edges of each attached 4-clique get the new color $c+1$.
    \item Let $G_2$ be the uncolored graph obtained from~$G_1$ by replacing each edge of color~$i$ with a necklace of thickness~$3r+p+3q+i$, where $q=\sum_{j=1}^cq_j$, and beads of size~$c-i+2$.
\end{itemize}

Let us now describe our set~$\Gamma$ of 2-complexes in which we try to embed~$G_2$.  See Figure~\ref{fig:reduc-complex}.  These 2-complexes depend on several successive choices.  First, let~$D^\times\in\DD$ be a fully intersecting colored drawing on~$\surf$ (thus with colors in $\{1,\ldots,c\}$).
\begin{itemize}
    \item Let $\complex_1$ be obtained from~$\surf$ by cutting~$\surf$ along the relative interior of each edge of~$D^\times$.
    \item Let $\complex_2$ be obtained from~$\complex_1$ by doing the following: For each edge of~$D^\times$ of color~$i$ ($i=1,\ldots,c$), connecting points $x$ and~$y$ of~$\complex_1$, we add an isolated edge connecting $x$ and~$y$ to the 2-complex, which is said to have \emph{color}~$i$.  (This may require subdividing~$\complex_1$ to turn $x$ and~$y$ into vertices.)
\end{itemize}

Let $U$ be the set of points of~$\complex_2$ that correspond to a single point of the interior of~$\surf$.  In detail, $U$ is made of (1) the points of~$\complex_2$ that correspond to an endpoint of an edge in~$D^\times$, and (2) the interior of the surface part of~$\complex_2$.

For each $i=1,\ldots,c$, let $(x_i^1,y_i^1),\ldots,(x_i^{q_i},y_i^{q_i})$ be pairs of points on~$U$.  (These points are not necessarily distinct.  There are finitely many choices up to a self-homeomorphism of~$\surf$.)
\begin{itemize}
    \item Let $\complex_3$ be obtained from~$\complex_2$ by doing the following: For each $i=1,\ldots,c$ and each $j=1,\ldots,q_i$, we add to the 2-complex an isolated edge connecting $x_i^j$ and~$y_i^j$, which is said to have \emph{color}~$i$.
\end{itemize}
For each unordered partition $p=p_1+\ldots+p_k$ of the integer~$p$, where $p_1,\ldots,p_k$ are positive integers, and for each $i=1,\ldots,k$, let $z_i^1,\ldots,z_i^{p_i+1}$ be $p_i+1$ points on $U$ (which is also a part of~$\complex_3$), such that all these $p+k$ points are pairwise distinct.  (Note that the points $z_i^j$ are not necessarily distinct from the points $x_{i'}^{j'}$ and $y_{i'}^{j'}$.)
\begin{itemize}
    \item Let $\complex_4$ be obtained from~$\complex_3$ by doing the following: For each $i=1,\ldots,k$, we identify the points $z_i^1,\ldots,z_i^{p_i+1}$ to a single point~$z_i$.
    \item Let $\complex_5$ be obtained from~$\complex_4$ by replacing each isolated edge~$e$ of~$\complex_1$ of color~$i$ ($i=1,\ldots,c$) with a necklace of thickness~$3r+p+3q+i$ and beads of size $c-i+2$.
\end{itemize}
To conclude, $\complex_5$ depends on the choice of the drawing~$D^\times\in\DD$, of the points $x_i^j$, $y_i^j$, of the partition of~$p$, and of the points~$z_i^j$.  We let $\Gamma=\Gamma(p,q_1,\ldots,q_c)$ be the set of 2-complexes~$\complex_5$ that can be obtained in this way, over all these choices.

\subsection{Correctness of the reduction}\label{sec:correctness-meta}

The correctness of the reduction is established in the two following lemmas.

\begin{lemma}\label{lem:drawing-to-embedding}
  Assume that there is a drawing~$D'$ in~$\DD$ of a colored graph~$G'$ obtained from~$G$ by performing at most $p$ successive vertex splits and by removing, for $i=1,\ldots,c$, at most $q_i$ edges of color~$i$.  Then $G_2(G,p,q_1,\ldots,q_c)$ embeds in some 2-complex in $\Gamma(p,q_1,\ldots,q_c)$.
\end{lemma}
\begin{proof}
  Let $D^\times$ be the colored subdrawing of~$D'$ made of the edges that carry at least one intersection in~$D'$, together with their endpoints.  Then $D^\times$ is fully intersecting, and it belongs to~$\DD$, because $\DD$ is stable.  Let $\complex_1$ and~$\complex_2$ be defined as above, for that specific choice of~$D^\times\in\DD$.  By construction, $G'$ embeds into~$\complex_2$ in such a way that each isolated edge of~$\complex_2$ of color~$i$ coincides with an edge of~$G'$ of color~$i$.

  It follows that, for an appropriate choice of the partition of~$p$ and of the vertices $x_i^j,y_i^j,z_i^j$, the graph~$G$ embeds into $\complex_4$ (again, we reuse the notations introduced in the description of the reduction), in such a way that each isolated edge of~$\complex_4$ of color~$i$ coincides with an edge of~$G$ of color~$i$ or is not used by the embedding.
  
  Note that, by construction, in that embedding the vertices of~$G$ are mapped to the surface part of~$\complex_4$, and thus we can attach a 4-clique to each vertex of~$G$ while preserving the fact that we have an embedding.  Thus, the graph~$G_1$ embeds into~$\complex_4$ too, in such a way that each isolated edge of~$\complex_4$ of color~$i$ ($i=1,\ldots,c$) coincides with an edge of~$G$ of color~$i$ or is not used at all.
 
  Finally, replacing each isolated edge of~$\complex_4$ of color~$i$, and similarly each edge of~$G_1$ of color~$i$, with a necklace of thickness $3r+p+3q+i$, where $q=\sum_{j=1}^cq_j$, and beads of size $c-i+2$ implies that $G_2$ embeds into~$\complex_5$.  Indeed, each edge of~$G_1$ that coincided with an isolated edge of~$\complex_4$ is replaced with a necklace that can be embedded in the corresponding necklace of~$\complex_5$; and each edge of~$G_1$ that belonged to the surface part of~$\complex_4$ is replaced with a necklace that can be drawn in a neighborhood of the edge of~$G_1$ without intersecting the rest of the drawing.  This concludes the proof.
\end{proof}

Conversely:
\begin{lemma}\label{lem:embedding-to-drawing}
  Assume that the graph~$G_2(G,p,q_1,\ldots,q_c)$ embeds in some 2-complex in $\Gamma(p,q_1,\ldots,q_c)$.  Then, there exists a drawing~$D'$ in~$\DD$ of a colored graph~$G'$ obtained from~$G$ by performing at most $p$ successive vertex splits and by removing, for $i=1,\ldots,c$, at most $q_i$ edges of color~$i$.
\end{lemma}
\begin{proof}
  Let $G_2:=G_2(G,p,q_1,\ldots,q_c)$, and let $\complex_5$ be some 2-complex in~$\Gamma(p,q_1,\ldots,q_c)$ in which $G_2$ can be embedded.  The 2-complex~$\complex_5$ is defined by the choice of a fully intersecting drawing~$D^\times\in\DD$, of pairs $(x_i^j,y_i^j)$, of a partition of~$p$, and of the points~$z_i^j$.  We also freely use the notations $G_1, \complex_1,\ldots,\complex_5$ used in the description of the reduction.

  First, in a series of claims, we prove that we can assume that each necklace of~$\complex_5$ is either not used at all by~$G_2$, or used by a necklace of~$G_2$ of the same type.
  \begin{itemize}
  \item Let $v$ be a vertex of~$G$.  We claim that, in the embedding of~$G_2$ in~$\complex_5$, $v$ cannot be mapped to the interior of a necklace~$N$ of~$\complex_5$.  Indeed, otherwise, since $v$ has degree at least three in~$G_2$, it is mapped to some interior vertex of~$N$ (i.e., not to an interior point of an edge of~$N$).  Moreover, $v$, as a vertex in~$G_2$, has at least three distinct neighbors, each of degree at least three, and thus mapped either to a vertex of~$N$, or to the complement of~$N$.  But, starting from the location of~$v$ in~$N$, there do not exist three disjoint paths going to different vertices of~$N$ or to the complement of~$N$.  This contradiction proves the claim.

  \item Assume now that some vertex~$v$ of~$G_2$ is mapped to the interior of a necklace~$N$ of~$\complex_5$, of thickness $3r+p+3q+i$ (where, again, $q=\sum_{j=1}^cq_j$), and beads of size $c-i+2$, for some~$i$.  We claim that $N$ actually contains a necklace of~$G_2$ of the same type.  Indeed, by the previous paragraph, $v$ is not a vertex of~$G$, and is thus an interior vertex of a necklace of~$G_2$.  Moreover, $v$ has degree at least three, so it is mapped to an interior vertex~$x$ of~$N$.  This vertex~$x$ of~$N$ is connected to each of its two distinct neighbors by $3r+p+3q+i$ parallel edges, and is incident to $c-i+2$ loops.  Moreover, $v$ is part of a necklace of thickness $3r+p+3q+j$ and beads of size $c-j+2$, for some~$j$.  Considerations similar to those of the previous paragraph imply that $i=j$.  Arguing similarly with the neighbors of~$v$ that are not vertices of~$G$ yields the result.

  \item We claim that without loss of generality, if the endpoints of an edge~$e$ of~$G_2$ avoid the interior of all necklaces of~$\complex_5$, then $e$ avoids all necklaces of~$\complex_5$.  To prove this, there are two cases:
  \begin{itemize}
      \item Assume that $e$ is a loop.  Let $v$ be the vertex of~$G_2$ incident to~$e$.  Since $v$ avoids the interior of all necklaces of~$\complex_5$, the loop~$e$ can be redrawn to a small neighborhood of~$v$, while preserving the fact that we have an embedding of~$G_2$ into~$\complex_5$.
      \item Assume that $e$ is not a loop.  It is part of a batch of at least $3r+p+3q+1$ parallel edges.  The interior of each necklace of~$\complex_5$ can be visited by at most one such edge, because the endpoints of~$e$ are outside it; also, each singular point of~$\complex_5$ can be visited by at most one such edge.  Moreover, $\complex_5$ has at most $r+q$ necklaces, and at most $2r+p+2q$ singular points that avoid the interior of all necklaces.  It follows that for some edge~$e'$ in the batch, the relative interior of~$e'$ avoids the interiors of all necklaces and all singular points.  We can thus reroute $e$ parallel to~$e'$, avoiding all necklaces, while preserving the fact that we have an embedding of~$G_2$ into~$\complex_5$.
  \end{itemize}
  \end{itemize}

  As a summary of the above considerations, we now have an embedding of~$G_2$ into~$\complex_5$ in such a way that each necklace of~$\complex_5$ is either not used, or used by a necklace of~$G_2$ of the same type.  This implies that $G_1$ embeds in~$\complex_4$ in such a way that each isolated edge of~$\complex_4$ of color~$i$ ($i=1,\ldots,c$) coincides with an edge of~$G_1$ of color~$i$, or is not used at all.  The same holds for~$G$, since $G$ is a subgraph of~$G_1$.

  Recall that $\complex_4$ is obtained by identifying points of~$\complex_3$ into $k$~points $z_1,\ldots, z_k$.  Let $S$ be the set of indices $i$ such that $z_i$ contains a point in the relative interior of an edge of~$G$.  We do the following:
  \begin{itemize}
  \item While, for some $i\in S$, $z_i$ contains a point of the relative interior of an edge $e=uv$ of~$G$, we do the following.  First, we perform a vertex split of one of its vertices, say~$v$, by replacing $uv$ with $uv'$, where $v'$ is a new vertex.  Second, we shrink edge $uv'$ without moving~$u$, so that in the new embedding of~$G$, edge $uv'$ does not contain any singular point of~$\complex_4$ in its relative interior.  The number of vertex splits performed so far is at most the size of~$S$, and after this step, the image of~$G$ avoids the points $z_i$, $i\in S$.
  \item Now, for each $i=1,\ldots,k$, we split $z_i$ into $p_i+1$ points in order to obtain the 2-complex~$\complex_3$.  In passing, if vertex~$z_i$ is used by a vertex~$v$ of~$G$, we perform at most $p_i$ splits of~$G$ at~$v$ in order to preserve the property that we have an embedding of some graph.
  \end{itemize}
  We now have an embedding, into~$\complex_3$, of a graph~$G''$ obtained from~$G$ by at most $|S|+\sum_{i\in S}p_i\le p$ vertex splits.  Moreover, as above, each isolated edge of $\complex_3$ of color~$i$ coincides with an edge of~$G''$ of color~$i$, or is not used at all.

  If we remove the edges embedded in $\complex_3\setminus\complex_2$, we obtain an embedding, in~$\complex_2$, of a graph~$G'$ obtained from~$G$ by at most $p$ vertex splits and by removing, for each~$i$, at most $q_i$ edges of color~$i$.  Moreover, again, each isolated edge of $\complex_2$ of color~$i$ coincides with an edge of~$G'$ of color~$i$, or is not used at all.

  Finally, we obtain a drawing~$D'$ of~$G'$ into~$\surf$ that results from~$D^\times$ by removing vertices and edges, and by adding vertices and edges that do not intersect the rest of the drawing.  Thus, $D'\in\DD$.
\end{proof}

\subsection{End of proof}\label{sec:end-meta}

The two preceding lemmas show that a \DCREM{} instance $(G,p,q_1,\ldots,q_c)$ is positive if and only if $G_2(G,p,q_1,\ldots,q_c)$ embeds in at least one of the 2-complexes in~$\Gamma(p,q_1,\ldots,q_c)$.  To conclude the proof, we show that the reduction described above turns into an actual algorithm, and analyze its complexity.  We first state and prove the result announced in Section~\ref{sec:storing} on the size of a representation of a drawing:

\begin{lemma}\label{lem:represent-drawings}
  Let $D$ be a drawing of a graph on a surface~$\surf$.  Then $D$ is represented by some pair~$(T,W)$ in which $T$ has at most $96(s+u)$ triangles, where $s$ is the topological size of~$\surf$, and the intersection points of~$D$ subdivide the edges of~$D$ into $u$ pieces in total.
\end{lemma}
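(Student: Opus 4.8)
The plan is to build a triangulation~$T$ of~$\surf$ that \emph{carries} the drawing~$D$: every piece of~$D$ will run along a path of~$T$ lying in the interior of~$\surf$, distinct pieces will use disjoint edge sets, and the walks obtained from these paths will form a set~$W$ with $(T,W)$ a representation of~$D$. The observation that keeps the number of triangles linear is that the crossing structure of~$D$ need not be recorded by~$T$ at all --- it is carried entirely by~$W$. Concretely, an intersection point of~$D$ of multiplicity~$m$ becomes an ordinary vertex of~$T$ of degree at least~$2m$; the $m$ strands of~$D$ through it become $m$ walks of~$W$ through that vertex, each using its own pair of incident edges of~$T$, in the cyclic order prescribed by~$D$ --- whether these walks cross or are tangent there plays no role for~$T$.

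\emph{Step 1: the arrangement of $D$.} Subdividing each edge of~$D$ at its intersection points produces the $u$ pieces. Let $\overline{X}$ be the graph with vertex set equal to the vertices of~$G$ together with the intersection points of~$D$, and with edge set the $u$ pieces. Two distinct pieces meet only at shared endpoints, so $\overline{X}$ is genuinely \emph{embedded} on~$\surf$, and it carries a faithful copy of~$D$ once one remembers which walks of~$\overline{X}$ are the edges of~$G$ and which rotation each vertex realizes. For the counts: $\overline{X}$ has $u$ edges; the intersection points account for $\sum_j m_j = u - |E(G)|$ subdivisions, hence there are at most $(u-|E(G)|)/2$ of them; and since $G$ has no isolated vertex, $|V(G)| \le 2|E(G)| \le 2u$. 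Thus $\overline{X}$ has at most $\tfrac{5}{2}u$ vertices, exactly $u$ edges, and --- as adding one edge to an embedded graph creates at most one new face --- at most $d+u$ faces, where $d \le s$ is the number of components of~$\surf$.

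\emph{Step 2: cellularize, then triangulate.} Next I would add arcs with endpoints on~$\overline{X}$ (introducing at most one new vertex per arc) to cut every face into an open disk; turning a face of Euler characteristic~$\chi_f$ into a disk costs $1-\chi_f$ arcs, and $\sum_f (1-\chi_f) = F(\overline{X}) - \chi(\surf) + |V(\overline{X})| - |E(\overline{X})| = O(s+u)$ by Step~1, so only $O(s+u)$ arcs are needed. This yields a cellular complex $\overline{X}^{+}$ on~$\surf$ with $O(s+u)$ edges and faces (any component of~$\surf$ disjoint from~$D$ is triangulated separately, costing $O(s)$ triangles). Passing to a regular neighbourhood of~$\overline{X}^{+}$ turns the complementary regions into disks with \emph{simple} boundary circles, and one can then triangulate that neighbourhood (one ``rectangle'' per edge, one ``hub'' per vertex) and each complementary disk by bounded-size fans, producing a genuine triangulation~$T$ of~$\surf$ whose number of triangles is a fixed constant times~$|E(\overline{X}^{+})|$. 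Carrying the constants through Steps~1--2 gives the claimed bound $96(s+u)$.

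\emph{Step 3: conclusion.} Each piece of~$D$ is an edge of~$\overline{X} \subseteq \overline{X}^{+}$, hence becomes a path of~$T$ contained in the interior of~$\surf$ (the image of~$D$ avoids $\partial\surf$, so no boundary edge of~$T$ is used); distinct pieces use disjoint edge sets. Concatenating, for each edge of~$G$, the paths of its pieces produces a set~$W$ of walks in~$T'$ using each edge of~$T'$ at most once, and then $(T,W)$ is a representation of~$D$. The main obstacle is Step~2: extracting from~$\overline{X}^{+}$ an \emph{honest} triangulation --- every triangle bounded by three pairwise distinct vertices and edges, every edge in at most two triangles --- even though faces of~$\overline{X}^{+}$ may have non-simple boundary walks and its short or degenerate cells break naive coning, while simultaneously keeping the triangle count at the stated linear bound. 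The regular-neighbourhood trick resolves the topological difficulty; pinning down the explicit constant~$96$ is then a matter of careful bookkeeping.
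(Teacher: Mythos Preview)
Your high-level strategy matches the paper's: form the arrangement graph of~$D$ (your~$\overline{X}$), extend it to something cellular on~$\surf$, triangulate, and read off the walks. Where you diverge is in the passage from the cellular complex to an honest triangulation. The paper does this in two clean moves: it first builds a \emph{weak} triangulation directly, face by face (after first covering each boundary component of~$\surf$ with a loop), with an explicit count of at most $6(s+u)$ triangles for each connected component; then it applies barycentric subdivision \emph{twice}, which is exactly what turns a weak triangulation into a genuine one and multiplies the triangle count by~$16$, yielding the stated $96(s+u)$. Your regular-neighbourhood idea is a legitimate alternative, but as written it remains a sketch: the construction (``one rectangle per edge, one hub per vertex, fans in the complement'') is not spelled out, and you simply assert that the constants work out to~$96$. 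The paper's double-barycentric trick is both shorter to state and gives the constant for free.

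Two smaller gaps to watch. First, you never explicitly incorporate $\partial\surf$ into your complex; your cellularizing arcs have endpoints on~$\overline{X}$, but a face touching the boundary may require an arc landing on~$\partial\surf$ to become a disk. The paper handles this up front by adding a loop along each boundary component (costing~$b$ edges). Second, your Euler-characteristic count $\sum_f(1-\chi_f)=F-\chi(\surf)+V-E$ is correct and gives $O(s+u)$ arcs, but this only cuts each face to a disk; you then still need to triangulate those disks, which the paper does with an explicit case analysis (disks with one or two boundary edges need special treatment). None of this is fatal, but without these details your argument establishes only a bound of the form $C(s+u)$ for some unspecified~$C$, not the claimed~$96(s+u)$.
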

\begin{proof}%
  We view $D$ as a graph~$G$ embedded in the interior of~$\surf$.  The vertices of~$G$ are the endpoints of the edges of~$D$ and the intersection points of~$D$; the edges of~$G$ are the pieces of the edges of~$D$.  A \emph{weak triangulation} is a graph~$T$ embedded on~$\surf$ such that each face of~$T$ (each connected component of the complement of the image of~$T$) is homeomorphic to a disk bounded by three edges, but this time some incident vertices and/or edges may coincide; in other words, a weak triangulation is not necessarily a 2-complex.  We gradually augment~$G$ to a weak triangulation of~$\surf$.  Recall that the topological size of~$\surf$ equals $s=d+g+b$, where $d$ is the number of connected components, $g$ is the genus, and $b$ is the number of boundary components of~$\surf$.  In a first case, we assume that $\surf$ is connected and intersects~$D$.

  We first cover each of the $b$~boundary components of~$\surf$ with a loop.  Now, consider a face~$f$ of this new graph (still denoted~$G$).  It has at least one boundary component, and each boundary component contains at least one vertex.  Recall that $G$ has no isolated vertex; let  $\ell'\ge1$ be the number of edges on the boundary of~$f$.  We distinguish several cases:
  \begin{itemize}
  \item If $f$ is homeomorphic to a disk and $\ell'=1$, then we insert a new vertex in the interior of the disk, and connect it to the unique vertex on the boundary of~$f$, turning~$f$ into $\ell'=1$ triangle;
  \item if $f$ is homeomorphic to a disk and $\ell'=2$, then we insert a new vertex in the interior of the disk, and connect it to each of the two vertices on the boundary of~$f$, turning~$f$ into $\ell'=2$ triangles;
  \item otherwise, let $g'\ge0$ and~$b'\ge1$ be the genus and number of boundary components of~$f$.  We now summarize a standard argument to triangulate~$f$ without adding vertices: One can add $g'$~loops to decrease the genus of~$f$ to zero, and there are still $b'$ boundary components, now with $\ell'+2g'$ sides in total.  Adding $b'-1$ edges, we connect all the boundary components of~$f$ into a single one; $f$ is now a disk with $\ell'+2g'+2b'-2$ sides, which is at least three (otherwise $f$ would be of one of the first two types).  In turn, this disk can be (weakly) triangulated into $\ell'+2g'+2b'-4$ triangles.
  \end{itemize}
  We now have a weak triangulation of~$f$ with at most $\ell'+2g'+2b'$ triangles. Initially, $G$ has at most $2u$ faces, and moreover, $\sum_fg'\le g$, $\sum_fb'\le b+2u$, and $\sum_f\ell'\le2u+b$.  So, to conclude the first case where $\surf$ is connected and intersects~$D$, we obtain that $G$ is weakly triangulated using at most $2u+b+2g+2b+4u=6u+2g+3b$ triangles.
  
  In the second case, if $\surf$ is still connected but does not intersect~$D$, we just insert a single vertex and a loop in~$G$ and apply the above process, weakly triangulating $\surf$ with at most $6+2g+3b$ triangles.

  So, if $\surf$ is connected, we can weakly triangulate it using at most $6+2g+3b+6u$ triangles.  To turn this weak triangulation into a triangulation, it suffices to apply barycentric subdivisions twice; the number of triangles is multiplied by~16.

  Finally, we obtain the result by summing up over all connected components.
\end{proof}

Recall that we consider drawings up to homeomorphism of~$\surf$.  The next lemma explains how to enumerate (representations of) the drawings~$D^\times$.
\begin{lemma}\label{lem:enum}
  Let $\DD^\times$ be the class of fully intersecting colored drawings that are in~$\DD$.  In $(c+s+r)^{\OO(s+r)}\cdot\delta(192(s+r)c)$ time, we can enumerate a set of $(c+s+r)^{\OO(s+r)}$ representations of drawings in~$\DD^\times$, each of size $\OO((s+r)c)$, such that each drawing in~$\DD^\times$ is represented at least once.
\end{lemma}
\begin{proof}
  By definition of~$r$, any drawing~$D^\times\in\DD^\times$ has at most $2r$ pieces, and can thus be represented by a pair $(T,W)$ such that $T$ has at most $192(s+r)$ triangles, by Lemma~\ref{lem:represent-drawings}.

  Then, starting with at most $192(s+r)$ triangles, we identify some of their sides in pairs, obtaining a triangulation~$T$; there are $(s+r)^{\OO(s+r)}$ possibilities.  For each such choice, we compute the genus, orientability, and number of boundary components of each connected component of the resulting surface using standard methods (in particular the Euler characteristic), and discard~$T$ if it is not homeomorphic to~$\surf$, all in time polynomial in $s+r$.

  Let $T$ be one of the remaining triangulations.  We choose an arbitrary set of at most~$2r$ walks~$W$ such that each edge of~$T$ is used at most once by the union of these walks; there are $(s+r)^{\OO(s+r)}$ possibilities.  We also assign a color to each of these walks, adding a factor of $\OO(c^{2r})$ in the number of choices.  We discard the choices in which some walk has no intersection with any other walk.

  Finally, for each of the $(c+s+r)^{\OO(s+r)}$ resulting representations of colored drawings~$D^\times$, we decide whether $D^\times\in\DD$ in $\delta(192(s+r)c)$ time, and discard those not in~$\DD$.
\end{proof}

\begin{proof}[Proof of Theorem~\ref{thm:main}]
  The algorithm tests the embeddability of $G_2=G_2(G,p,q_1,\ldots,q_c)$ into each 2-complex in~$\Gamma=\Gamma(p,q_1,\ldots,q_c)$, and returns ``yes'' if and only if at least one embeddability test succeeded.  The correctness of the reduction follows from Lemmas~\ref{lem:drawing-to-embedding} and~\ref{lem:embedding-to-drawing}.  There remains to provide some details on the algorithm and the runtime.

  $G_2$ has size $\OO((c+r+p+q)n)$ and can be computed in $\OO((c+r+p+q)n)$ time.
  
  Using Lemma~\ref{lem:enum}, we can compute a set of $(c+s+r)^{\OO(s+r)}$ representations of the drawings~$D^\times$ appearing in the definition of the set of 2-complexes~$\Gamma$ in $(c+s+r)^{\OO(s+r)}\cdot\delta(192(s+r)c)$ time, and each representation has size $\OO((s+r)c)$.  Given such a representation, we can compute $\complex_1$ and then $\complex_2$ in time polynomial in $c+s+r$, each of size $\OO(s+r)$ (ignoring the colors on the isolated edges).

  The number of partitions of the integer~$p$ is $2^{\OO(p)}$.  We select the points $x_i^j$, $y_i^j$, and~$z_i^j$ on~$U$.  There are at most~$2r$ points in~$U$ that correspond to an endpoint of an edge in~$D^\times$; moreover $D^\times$ has at most $s+2r$ faces in~$\surf$, and any two points in the same face are equivalent (up to a self-homeomorphism of~$\surf$) as far as the choice of the point is concerned.  There are at most $2p+2q$ points, and thus at most $(s+4r+2p+2q)^{2p+2q}$ choices in total.  Thus, for a given~$D^\times$, there are $(s+r+p+q)^{\OO(p+q)}$ possible choices of 2-complexes~$\complex_5$.  So, in total, there are $(c+s+r+p+q)^{\OO(c+s+r+p+q)}$ choices for~$\complex_5$. 

  To compute the 2-complex corresponding to a given choice, if necessary, we first refine the 2-complex~$\complex_2$ by subdividing triangles so that the points $x_i^j$, $y_i^j$, and~$z_i^j$ become vertices of the new 2-complex.  Then we add the $q$ isolated edges, obtaining~$\complex_3$, and perform the $p$ identifications of vertices, obtaining~$\complex_4$.  Finally, $\complex_5$ is obtained from~$\complex_4$ by replacing the isolated edges with necklaces.  All this takes time polynomial in $c+s+r+p+q$, and the size of the resulting complex~$\complex_5$ is $\OO(s+(c+r+p+q)(r+q))$.

  So there are $(c+s+r+p+q)^{\OO(c+s+r+p+q)}$ 2-complexes in~$\Gamma$, each of size $\OO(s+(c+r+p+q)(r+q))$, and the computation of~$\Gamma$ takes $(c+s+r+p+q)^{\OO(c+s+r+p+q)}\cdot\delta(192(s+r)c)$ time.

  Finally, we test the embeddability of~$G_2$ into each 2-complex in~$\Gamma$ using \Cref{thm:embedding}, which takes $2^{\poly(c+s+r+p+q)}\cdot\delta(192(s+r)c)\cdot n^2$ time, or $f(c+s+r+p+q)\cdot n$ time.
\end{proof}

\section{Applications: FPT algorithms for many crossing number variants}\label{sec:consequences}

In this section, we demonstrate the wide applicability of our framework to crossing number problems.   First, as a toy example, we reprove that we can compute the traditional crossing number in linear FPT time on surfaces~\cite{cm-faeg2-26} (albeit with a worse dependency in the parameter compared to Lokshtanov, Panolan, Saurabh, Sharma, Xue, and Zehavi~\cite{Cr-SODA25} in the plane):
\begin{proposition}
  Let $\surf$ be a surface with topological size~$s$, let $t\geq0$ be an integer, and let $G$ be a graph of size~$n$.   Deciding whether $G$ has crossing number at most~$t$ in $\surf$ can be done in time $2^{\poly(s+t)}\cdot n^2$, or in time $f(s+t)\cdot n$, for some function~$f$.
\end{proposition}
\begin{proof}
  Recall from \Cref{obs:normalcr} that we apply our \DC{} framework to the case where $\DD$ is the class of normal drawings on~$\surf$ with at most~$t$ crossings  and a single color ($c=1$).  Then $\DD$ is stable, and membership in~$\DD$ can be decided in time~$\delta(j)=j^{\OO(1)}$, a polynomial in the size~$j$ of the representation of the input drawing.  Moreover, $r:=2t$ is an upper bound on the intersecting size of~$\DD$, because the only intersection points in a normal drawing are crossings, which have multiplicity two.  \Cref{thm:main} implies the result.
\end{proof}

We now turn to the promised applications.  We survey, in order, specific drawing styles, colored crossing number problems, various counting methods, and problems where prior edge removals and vertex splits are allowed.  Throughout this section, again, let $\surf$ be a surface and $\DD$ the class of colored drawings of graphs on~$\surf$.

In most of our reductions here, we use only the \DCP{} (that is, without vertex splits or edge removals), except in \Cref{sub:delete-split}, where we need the full strength of the \DCREM{}.   
In nearly all cases, only normal drawings of graphs are considered.  Note that in all cases, for positive instances, we can compute a corresponding drawing using \Cref{thm:explicit}; the runtime is that of the quadratic-time algorithm for the corresponding decision problem.

Crossing number variants that prescribe the rotation system of the input graph are postponed to \Cref{sec:metasection-rot}.

\subsection{Variations on the traditional crossing number}

\subparagraph{$k$-planar drawings.}  Introduced by Pach and T\'oth~\cite{DBLP:journals/combinatorica/PachT97}, \emph{$k$-planar drawings} (for $k\ge0$) are {normal} drawings in the plane such that every edge is involved in at most $k$ crossings. The \emph{$k$-planar crossing number} of~$G$ is the smallest number of crossings among all $k$-planar (normal) drawings of~$G$. A graph may have no $k$-planar drawing at all, and then its $k$-planar crossing number is~$\infty$ by convention.
Already for $k=1$, deciding the existence of a $1$-planar drawing is \NP-hard even for very restricted inputs \cite{DBLP:journals/algorithmica/GrigorievB07,DBLP:journals/siamcomp/CabelloM13}.
This definition naturally extends to an arbitrary surface $\surf$, which we call the \emph{$k$-surface crossing number} in~$\surf$. %

\subparagraph{$k$-quasi-planar and min-$k$-planar drawings.}
There are similar concepts of \emph{$k$-quasi-planar}~\cite{DBLP:journals/combinatorica/AgarwalAPPS97} (no $k$ edges pairwise cross) and of \emph{min-$k$-planar}~\cite{DBLP:conf/gd/BinucciBBDDHKLMT23} (for every two edges with a common crossing, one carries at most $k$ crossings) normal drawings in the plane, which give rise to the corresponding crossing number flavors in the plane, and we may likewise generalize them to the \emph{$k$-quasi-surface} and \emph{min-$k$-surface} \emph{crossing numbers} in a surface~$\surf$.

\begin{figure}
    \centering
    \def\svgwidth{.\linewidth}
  \includegraphics[width=0.25\hsize,page=1]{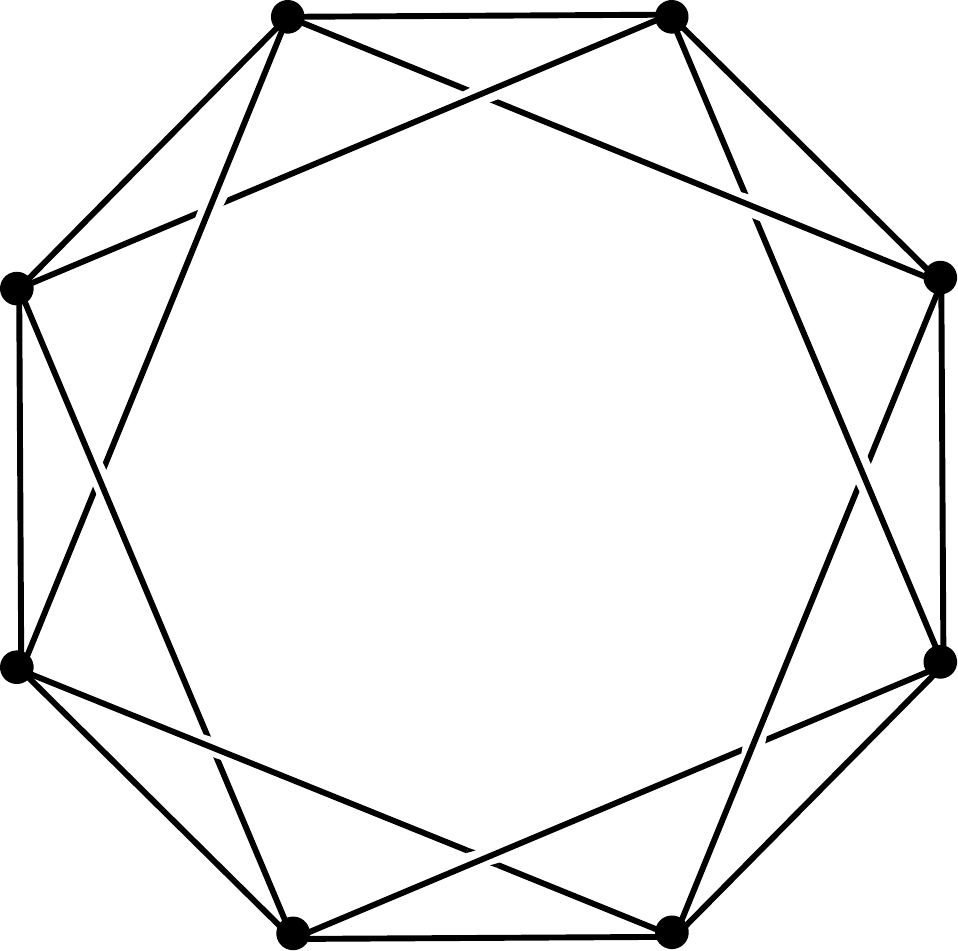}
    \caption{A drawing of some graph that is a $1$-gap drawing, as certified by the casing.}
    \label{fig:gapcr}
\end{figure}

\subparagraph{$k$-gap drawings.}
Another known concept is that of \emph{$k$-gap crossing number} \cite{DBLP:journals/tcs/BaeBCEE0HKMRT18,DBLP:journals/jgaa/BeusekomPS22}, minimizing the number of crossings over \emph{$k$-gap drawings} in the plane (\Cref{fig:gapcr}), namely, normal drawings $D$ of a graph~$G$ admitting a mapping from each crossing in $D$ to one of the two involved edges such that at most $k$ crossings are mapped to each edge of~$G$. This problem is \NP-hard for $k=1$~\cite{DBLP:journals/tcs/BaeBCEE0HKMRT18}. Again, the $k$-gap crossing number is naturally generalized to any surface~$\surf$.

\subparagraph{Restriction to simple drawings.}
Many authors restrict the problems to simple graphs and require the admissible drawings to be simple; this restriction defines the \emph{simple-drawing $k$-surface} (\emph{$k$-quasi-surface}, \emph{min-$k$-surface}, \emph{$k$-gap}) crossing numbers.
These {simple-drawing} variants can be very different from the non-simple ones, see, e.g., \cite[Chapter~7]{schaefer2017crossing} and \cite{DBLP:conf/gd/HlinenyK24}, but we will show that they also smoothly fit into our framework.

\subparagraph{Applying our framework.}
Given $t\geq0$, deciding whether a graph $G$ has $k$-planar ($k$-quasi-planar, or min-$k$-planar) crossing number at most $t$ can be done in quadratic \FPT-time~\cite{patterncrossing24} using Courcelle's theorem, and thus with a huge dependency in the parameter~$t$.  (Note that if $k>t$, then setting $k:=t$ does not change the problem, so we may without loss of generality assume that $k\le t$.)  The parameterized complexity of these three problems in other surfaces, and of the $k$-gap crossing problem altogether, had not been studied prior to us.  Even in the known planar cases, we give two algorithms with a better runtime than previously known:

\begin{proposition}\label{prop:kplanar-etc}%
Let $k\ge1$ and $t\ge0$ be integers, let $\Pi\in\{$`$k$-surface', `$k$-quasi-surface', `min-$k$-surface', `$k$-gap'$\}$, let $\surf$ be a surface with topological size~$s$, and let $G$ be a graph of size~$n$.  Deciding whether $G$ has $\Pi$-crossing number at most~$t$ in $\surf$ can be done in time $2^{\poly(s+t)}\cdot n^2$, or in time $f(s+t)\cdot n$ for some function~$f$.  The same conclusion holds for the simple-drawing $\Pi$-crossing number of a simple input graph~$G$.
\end{proposition}
\begin{proof}
  The proof relies on the algorithm of \Cref{thm:main} for the \DCP{} with a single color ($c=1$).  We let $\DD$ be the class of normal drawings on~$\surf$ satisfying~$\Pi$ and with at most $t$ crossings; it is straightforward to check that $\DD$ is stable.
  
  Given a representation of a drawing $D$, one may decide whether $D\in\DD$ in time polynomial in the size of the representation (and independently of the parameter $k$). This is an easy routine in all cases of $\Pi$ except when $\Pi=$\,`$k$-gap'; in the latter case we additionally employ a standard maximum flow algorithm to decide the existence of a mapping from the crossings to their incident edges respecting ``capacities'' of edges to accept at most $k$ crossings each.
  
  Since we only consider normal drawings, $r:=2t$ is an upper bound on the intersecting size of~$\DD$.  Hence, our conclusion follows by an application of \Cref{thm:main}.  Note that the factor $\delta(s+t)$ is a polynomial in $s+t$ and can be ignored in the runtime, given that there is already a factor of $2^{\poly(s+t)}$.

  Next, we consider the simple-drawing $\Pi$-crossing number, assuming $G$ is simple.  The only modification to the above argument is that we restrict $\DD$ to simple drawings.  We remark that, for this case, it is important that our definition of a simple drawing does not immediately exclude drawings of non-simple graphs; indeed, while this is irrelevant for our simple graph~$G$, adding an uncrossed loop to a drawing~$D\in\DD$ must result in a drawing in~$\DD$.
\end{proof}

\subparagraph{Fan-crossing, fan-planar, and related drawing styles.}
A normal drawing is \emph{fan-crossing}~\cite{DBLP:journals/combinatorics/0001U22} if all edges crossing the same edge are incident to a common vertex.  One can restrict this notion to \emph{strongly} or \emph{weakly fan-planar} drawings, which not necessarily lead to the same crossing number problems, see \Cref{fig:fan} and Cheong, F\"orster, Katheder, Pfister, and Schlipf~\cite{DBLP:conf/gd/CheongFKPS23}.  The \emph{fan-crossing}, \emph{strongly fan-planar}, and \emph{weakly fan-planar crossing numbers} are the minimum number of crossings over all strongly fan-planar, weakly fan-planar, and fan-crossing drawings in the plane, respectively. 
Formally, a fan-crossing drawing is called \emph{weakly fan-planar} if, whenever edges $f_1$ and~$f_2$ with a common end~$v$ both cross an edge~$e$, the (sub)arcs of both $f_1,f_2$ from $v$ to the crossing with~$e$ lie on the same side of~$e$; and it is called \emph{strongly fan-planar} if, moreover, in the described situation of $f_1,f_2$ both crossing $e$, the union $e\cup f_1\cup f_2$ in the drawing does not enclose both ends of~$e$ (in the plane).
We refer to Cheong et al.~\cite{DBLP:conf/gd/CheongFKPS23} for a closer discussion.

\begin{figure}%
    \centering
    \def\svgwidth{\linewidth}
\begingroup%
  \makeatletter%
  \providecommand\color[2][]{%
    \errmessage{(Inkscape) Color is used for the text in Inkscape, but the package 'color.sty' is not loaded}%
    \renewcommand\color[2][]{}%
  }%
  \providecommand\transparent[1]{%
    \errmessage{(Inkscape) Transparency is used (non-zero) for the text in Inkscape, but the package 'transparent.sty' is not loaded}%
    \renewcommand\transparent[1]{}%
  }%
  \providecommand\rotatebox[2]{#2}%
  \newcommand*\fsize{\dimexpr\f@size pt\relax}%
  \newcommand*\lineheight[1]{\fontsize{\fsize}{#1\fsize}\selectfont}%
  \ifx\svgwidth\undefined%
    \setlength{\unitlength}{832.99722467bp}%
    \ifx\svgscale\undefined%
      \relax%
    \else%
      \setlength{\unitlength}{\unitlength * \real{\svgscale}}%
    \fi%
  \else%
    \setlength{\unitlength}{\svgwidth}%
  \fi%
  \global\let\svgwidth\undefined%
  \global\let\svgscale\undefined%
  \makeatother%
  \begin{picture}(1,0.24127389)%
    \lineheight{1}%
    \setlength\tabcolsep{0pt}%
    \put(0.83661058,0.00316534){\color[rgb]{0,0,0}\makebox(0,0)[lt]{\lineheight{1.49999988}\smash{\begin{tabular}[t]{l}(c)\end{tabular}}}}%
    \put(0.41273466,0.00316534){\color[rgb]{0,0,0}\makebox(0,0)[lt]{\lineheight{1.49999988}\smash{\begin{tabular}[t]{l}(b)\end{tabular}}}}%
    \put(0.06652352,0.00316534){\color[rgb]{0,0,0}\makebox(0,0)[lt]{\lineheight{1.49999988}\smash{\begin{tabular}[t]{l}(a)\end{tabular}}}}%
    \put(0,0){\includegraphics[width=\unitlength,page=1]{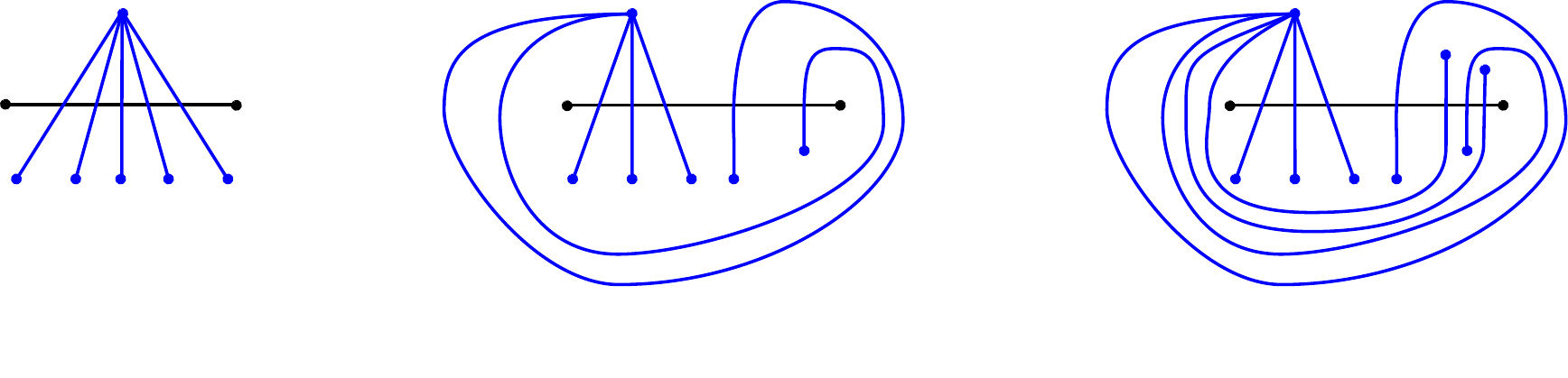}}%
  \end{picture}%
\endgroup%

    \caption{(a) A strongly fan-planar drawing.  (b) A weakly fan-planar drawing that is not strongly fan-planar.  (c) A fan-crossing drawing that is not weakly fan-planar.}
    \label{fig:fan}
\end{figure}

In a very recent manuscript, Hendrey, Karol, and Wood \cite{DBLP:journals/corr/abs-2507-22395} define \emph{$k$-cover-planar} drawings as normal drawings in which, for every edge $e$, the set of edges intersecting $e$ can be covered by at most $k$ fans.  Similarly, they define \emph{$k$-matching-planar} drawings as normal drawings in which, for every edge $e$, the set of edges intersecting~$e$ do not contain a matching of size~$k+1$.
Obviously, a fan-crossing drawing is $1$-cover-planar, a $k$-planar drawing is $k$-cover-planar, a $k$-cover-planar drawing is $k$-matching-planar, and a $k$-matching-planar drawing is $(2k)$-cover-planar.
Let the \emph{$k$-cover-planar} crossing number be the minimum number of crossings over all $k$-cover-planar drawings, and similarly let the \emph{$k$-matching-planar} crossing number be the minimum number of crossings over all $k$-matching-planar drawings.

Also, a normal drawing is \emph{fan-crossing free} if no two edges crossing the same edge share a vertex, and the \emph{fan-crossing free crossing number} is the minimum number of crossings over all fan-crossing free drawings.

\FPT\ algorithms, all using Courcelle's theorem, exist for the fan-crossing and fan-crossing free crossing numbers, by M\"unch and Rutter~\cite{patterncrossing24}, and for the weakly fan-planar crossing number, by Hamm, Klute, and Parada~\cite{DBLP:journals/corr/abs-2412-13092}, and it seems that with further adjustments the latter authors could also handle the strongly fan-planar version.  
We solve all these variants, again with a better runtime.  For the strongly fan-planar version, we use the fact that our framework allows for surfaces with boundary, and thus can easily capture the ``infinite face'' of the drawing; indeed, strong fan-planarity is invariant under self-homeomorphisms of the plane, but not of the sphere (e.g., the drawings in \Cref{fig:fan}(a) and (b) are homeomorphic when seen on the sphere).

\begin{proposition}\label{prop:fan}
Let $k\geq1$ and $t\geq0$ be integers and let $G$ be a graph of size~$n$.  Deciding whether $G$ has strongly fan-planar, weakly fan-planar, fan-crossing, $k$-matching-planar, $k$-cover-planar, or fan-crossing free crossing number at most~$t$ in the plane can be done in time $2^{\poly(t)}\cdot n^2$, or in time $f(t)\cdot n$ for some function~$f$.
\end{proposition}
\begin{proof}
  The proof is very similar to that of \Cref{prop:kplanar-etc}, specialized to the plane.
  We need to be able to recognize strongly fan-planar, weakly fan-planar, fan-crossing, or fan-crossing free drawings in polynomial time, which is not difficult from their definitions.
  To recognize $k$-matching-planar drawings, we employ a standard maximum matching algorithm for every edge; as before, it suffices to do this in polynomial time.  The case of recognizing $k$-cover-planar drawings is more interesting since no fully polynomial recognition algorithm is known; the problem is equivalent to solving $k$-vertex-cover, which can easily be solved in time $\ca O(2^kt)$ for each of the at most $2t$ crossed edges since no more than $t$ edges cross a single edge. However, this runtime does not impair the overall runtime of \Cref{thm:main}.
  \end{proof}
Extensions of \Cref{prop:fan} to arbitrary surfaces are obviously possible.

\subparagraph{Drawings that may not be normal.}
To illustrate the applicability of our framework to non-normal drawings, we introduce the following example. For an integer $k\geq2$, we define the \emph{$k$-intersecting crossing number} of a graph $G$ to be the minimum value~$t$ such that there is a drawing of $G$ with at most~$t$ intersection points in which the multiplicity of every intersection point is at most~$k$.  For every graph $G$, the traditional crossing number of $G$ is at most $k\choose2$ times its $k$-intersecting crossing number, and this bound can be tight.  We prove:

\begin{proposition}\label{prop:k-intersecting}
Let $\surf$ be a surface with topological size~$s$, let $k\geq2$ and $t\geq0$ be integers, and let $G$ be a graph of size~$n$.  Deciding whether $G$ has $k$-intersecting crossing number at most~$t$ in $\surf$ can be done in time $2^{\poly(s+k+t)}\cdot n^2$, or in time $f(s+k+t)\cdot n$ for some function~$f$.
\end{proposition}
\begin{proof}
  The proof is again very similar to the proof of \Cref{prop:kplanar-etc}.  Let $\DD$ be the class of (possibly non-normal) drawings on~$\surf$  with at most $t$ intersection points, all of multiplicity at most~$k$; $\DD$ is clearly stable, and testing membership in~$\DD$ can be done in polynomial time.  Moreover, $r:=kt$ is an upper bound on the intersecting size of~$\DD$.  Hence, our conclusion follows again by applying \Cref{thm:main}.
\end{proof}

Note that our definition allows for tangential intersections of edges.  However, the same argument would work in the variant where we forbid tangential intersections.

We refer to \Cref{sub:nontraditional} for more examples of different methods for counting crossings.

\subsection{Crossing problems with colored edges: Joint crossing number and generalizations}\label{sub:coloredges}

We now turn our attention to problems in which the colors of intersecting edges play a role. As a natural example, we choose the \emph{joint crossing number} problem introduced a while ago by Negami~\cite{DBLP:journals/jgt/Negami01}.  In this problem, the goal is to embed two input graphs simultaneously on the same surface while minimizing the number of crossings between them.  It is \NP-hard for any fixed Euler genus $g\geq6$~\cite{DBLP:conf/isaac/HlinenyS15,DBLP:conf/mfcs/Hlineny25}, and the complexity parameterized by the solution value has not been studied prior to our work.  We actually solve the following new problem generalizing the joint crossing number:
\medskip
\Prob{\cccn{}}%
{A surface~$\surf$; a symmetric matrix~$M$ of size $c\times c$ with nonnegative integer values; a colored graph~$G$ with colors in~$\{1,\ldots,c\}$}%
{Is there a normal colored drawing $D$ of~$G$ in~$\surf$ such that, for each pair $i,j\!\in\!\{1,\ldots,c\}$, the number of crossings involving two edges, one colored $i$ and the other~$j$, is at most~$M_{i,j}$?}

\begin{proposition}\label{prop:CCCN}
The \cccn{} can be solved in time $2^{\poly(s+t)}\cdot n^2$, where $s$ is the topological size of~$\surf$, $t=\sum_{1\leq i\leq j\leq c}M_{i,j}$, and $n$ is the size of~$G$.
\end{proposition}
\begin{proof}
  Without loss of generality, we assume that $c\le t+1$, since merging the colors corresponding to empty rows of~$M$ will not change the problem.  Let $\DD$ be the class of normal drawings on~$\surf$ with at most $M_{i,j}$ crossings involving an edge of color~$i$ and an edge of color~$j$.  Clearly $\DD$ is stable, and one can decide whether a drawing is in~$\DD$ in time polynomial in the representation of the drawing.  Moreover, $r:=2t$ is an upper bound on the intersecting size of~$\DD$.  \Cref{thm:main} yields the result.
\end{proof}

\begin{corollary}\label{cor:jointcr}
  One can decide whether the joint crossing number of graphs $G_1$ and $G_2$, each embeddable in a surface~$\surf$, is at most~$t$ in time $2^{\poly(s+t)}\cdot n_0^2$ or in time $f(s+t)\cdot n_0$ for some function~$f$, where $s$ is the topological size of~$\surf$ and $n_0$ is the size of~$G_1\cup G_2$.
\end{corollary}
\begin{proof}
  We apply \Cref{prop:CCCN} as follows.  For $i=1,2$, we use color~$i$ for~$G_i$, define $G$ to be the disjoint union of $G_1$ and~$G_2$, let $M=\begin{pmatrix}0&t\\t&0\end{pmatrix}$.
\end{proof}

Many variants of the above problem exist in which one also prescribes the rotation system of the input graph.  We defer the discussion on how we handle such problems to \Cref{sec:metasection-rot}.

\subsection{Non-traditional methods of counting crossings}\label{sub:nontraditional}

We now look at different problems that do not count the crossings simply ``one by one''.

\subparagraph{Edge crossing number.}  Perhaps the most natural problem that fits into this category is the edge crossing number. The \emph{edge crossing number} of a graph $G$ is the smallest integer~$t$ such that $G$ has a normal drawing having $t$ edges with at least one crossing. 
Computing the edge crossing number is \NP-hard~\cite{DBLP:conf/gd/BalkoHMOVW24}, and according to Schaefer~\cite[Section 3.2: Edge crossing number]{DBLP:journals/combinatorics/SchaeferDS21}, no \FPT\ algorithm parameterized by the solution size $t$ is known even in the plane.

\begin{proposition}\label{prop:edgecrossn}%
  Let $\surf$ be a surface with topological size~$s$, let $t\geq0$ be an integer, and let $G$ be a graph of size~$n$.  Deciding whether $G$ has edge crossing number at most~$t$ in~$\surf$ can be done in time $2^{\poly(s+t)}\cdot n^2$, or in time $f(s+t)\cdot n$ for some function~$f$.
\end{proposition}
\begin{proof}
  It is well known, see, e.g., Schaefer~\cite{schaefer2017crossing}, that any drawing~$D$ of a graph~$G$ with at most $t$ edges carrying an intersection point can be turned into a normal drawing~$D$ in which no two edges cross twice, without introducing new crossed edges.  After removing the possible self-crossings of edges, we obtain a drawing of~$G$ with at most $t$ edges carrying an intersection point and, moreover, with at most $t\choose2$ intersection points.
  
  Let $\DD$ be the class of normal drawings on~$\surf$ such that at most $t$ edges carry an intersection point, and with at most $t\choose2$ intersection points.  By the previous paragraph, $G$ has edge crossing number at most~$t$ if and only if $\DD$ contains a drawing of~$G$.  Moreover, $\DD$ is stable, and membership in~$\DD$ can be tested in polynomial time.  Also, $r:=2{t\choose2}$ is an upper bound on the intersecting size of~$\DD$.  \Cref{thm:main} concludes.
\end{proof}
A natural edge-colored generalization can be solved by a straightforward combination of the ideas in the proof of \Cref{prop:CCCN} and of the previous proof.

\subparagraph{Odd and pair crossing numbers.}

The \emph{pair (resp.\ odd) crossing number} of a graph $G$ is the minimum $t$ such that there exists a normal drawing of $G$ in which at most $t$ pairs of edges mutually cross (resp.\ cross an odd number of times).
While the question whether the pair crossing number coincides with the traditional crossing number is one of the biggest open problems in crossing numbers, Pelsmajer, Schaefer, and \v{S}tefankovi\v{c}~\cite{DBLP:journals/dcg/PelsmajerSS08} proved that the odd crossing number can be lower than the traditional crossing number.

\FPT\ algorithms for the pair and odd crossing numbers were given again by Pelsmajer, Schaefer, and \v{S}tefankovi\v{c}~\cite{DBLP:conf/gd/PelsmajerSS07a}, with an unspecified dependency on~$t$; however, because these algorithms are based on an adaptation of Grohe's algorithm \cite{DBLP:journals/jcss/Grohe04}, they are quadratic in the size~$n$ of the input and the dependency on~$t$ is at least an exponential tower of height four.  We get faster algorithms:

\begin{proposition}\label{prop:pair-odd-cr}
  Let $\Pi\in\{$`pair', `odd'$\}$, let $t\geq0$ be an integer, and let $G$ be a graph of size~$n$. Deciding whether $G$ has $\Pi$-crossing number at most~$t$ in the \emph{plane} can be done in time $2^{2^{\OO(t)}}\!\!\cdot n^2$, or in time $f(t)\cdot n$ for some function~$f$.
\end{proposition}
\begin{proof}
  The approach is analogous to the proof of \Cref{prop:edgecrossn}, with the additional ingredient given by the bounds by Pelsmajer et al.~\cite{DBLP:conf/gd/PelsmajerSS07a}: in the case of the plane, if $G$ is of pair (respectively, odd) crossing number~$t$, then there exists a normal drawing $D$ of $G$ achieving this optimum such that the number of crossings in $D$ is at most $t2^t$ (respectively,~$9^t$).

  Let $\DD$ be the class of normal drawings in the plane (or sphere) with $\Pi$-crossing number at most~$t$ and with at most $\max(t2^t,9^t)$ intersection points.   By the previous paragraph, $G$ has $\Pi$-crossing number at most~$t$ if and only if $\DD$ contains a drawing of~$G$.  Moreover, $\DD$ is stable, and membership in~$\DD$ can be tested in polynomial time.  Also, $r:=2\max(t2^t,9^t)$ is an upper bound on the intersecting size of~$\DD$.  \Cref{thm:main} concludes.
\end{proof}

We remark that the same approach would extend to surfaces, assuming some bounds analogous to those given by Pelsmajer et al.~\cite{DBLP:conf/gd/PelsmajerSS07a} for the plane.

\subsection{Allowing edge removals and vertex splits before drawing}\label{sub:delete-split}

Finally, we consider crossing number variants that allow graph simplifications before drawing.

\subparagraph{Skewness.}

The \emph{skewness} of a graph $G$ is the smallest number of edges whose removal from~$G$ leaves a planar graph. Deciding whether the skewness of $G$ is at most $q$ is NP-complete~\cite{liu-deletion}, and linear-time FPT algorithms are claimed, without details, by Kawarabayashi and Reed~\cite{DBLP:conf/stoc/KawarabayashiR07} and Jansen, Lokshtanov, and Saurabh~\cite{jls-nopa-14}.  We generalize the problem as follows.

\Prob{Color-Constrained Skewness with Crossings}%
{A surface~$\surf$; non-negative integers $t$ and $q_1,\ldots,q_c$; a colored graph~$G$ with colors in~$\{1,\ldots,c\}$.}%
{Can we remove, for each $i=1,\ldots,c$, at most $q_i$ edges of color~$i$ from $G$, such that the resulting graph has crossing number at most $t$ in~$\surf$?}

\begin{proposition}\label{prop:skewness}
  The {\sc Color-Constrained Skewness with Crossings} problem can be solved in time $2^{\poly(s+t+q)}\cdot n^2$ or in time $f(s+t+q)\cdot n$ for some function~$f$, where $s$ is the topological size of~$\surf$, $q=\sum_iq_i$, and $n$ is the size of~$G$.
\end{proposition}
\begin{proof}
  Without loss of generality, we assume that $c\leq q+1$, since the colors $i$ with $q_i=0$ can be merged into one.  Let $\DD$ be the class of normal drawings on~$\surf$ with at most $t$ crossings.  Then $\DD$ is a stable class, membership in~$\DD$ can be tested in polynomial time, and $r:=2t$ is an upper bound on the intersecting size of~$\DD$.  There remains to apply \Cref{thm:main} with $p=0$ and with the specified values of $q_1,\ldots,q_c$.
\end{proof}

\subparagraph{Splitting number.}  The smallest integer $p$ such that a graph obtained from the given graph~$G$ by $p$ successive vertex splits is embeddable in~$\surf$ is called the \emph{splitting number} of~$G$ in~$\surf$ \cite{splitsurf,hjr-sncg-85}. This problem is \NP-hard~\cite{DBLP:journals/dam/FariaFN01}, already in the plane.
N{\"{o}}llenburg, Sorge, Terziadis, Villedieu, Wu, and Wulms~\cite{DBLP:conf/gd/NollenburgSTVWW22} proved that the property of having splitting number at most~$p$ is minor-monotone in any fixed surface, and so the {$\surf$-splitting number} has a nonuniform \FPT\ algorithm parameterized by $p$ using the theory of graph minors.  We generalize and improve the latter result to a uniform \FPT\ algorithm:
\begin{proposition}\label{prop:splitting}
  Let $\surf$ be a surface with topological size~$s$, let $t,p\geq0$ be integers, and let $G$ be a graph of size~$n$.  Deciding whether $G$ has, after at most $p$ vertex splits, crossing number at most $t$ in~$\surf$ can be done in time $2^{\poly(s+t+p)}\cdot n^2$, or in time $f(s+t+p)\cdot n$ for some function~$f$.
\end{proposition}
\begin{proof}
  We use the same definitions of~$\DD$ and~$r$ as in the proof of \Cref{prop:skewness}, and apply \Cref{thm:main} with $q_1=\ldots=q_c=0$ and the specified value of~$p$.
\end{proof}

\section{Fixing the rotation system: framework and applications}\label{sec:metasection-rot}

A natural restriction, when minimizing the number of crossings of an input graph, is to prescribe the clockwise cyclic order of edges around each vertex in the drawing---the \emph{rotation system}.  It turns out that \Cref{thm:main} also allows us to capture such constraints, at the cost of a mild additional condition on the class~$\DD$.  In order to make our framework more directly applicable, we first show a quite general consequence of \Cref{thm:main}, which can then be instantiated to obtain FPT algorithms for several well-studied problems with fixed rotation system.  In this section, we only consider orientable surfaces.

\subsection{Statement of the general result}

In our general result, we consider an input graph~$G$ together with a prescribed rotation system, and we consider drawings of~$G$ in which the input rotation system may be flipped independently for each connected component of~$G$.  We first need some terminology.

Given a graph~$G$, a \emph{rotation system} is a family of cyclic permutations $\pi=\big(\pi_v:v\in V(G)\big)$, one for each vertex of~$G$, such that the domain of the permutation~$\pi_v$ is the set of edges incident to~$v$ in~$G$.  (If a loop is incident to~$v$, it should appear twice in that cyclic permutation---or more formally, we should consider a cyclic permutation of the \emph{half-edges} of~$G$ incident to~$v$.)  Let $\pi_v^{-1}$ be the inverse of~$\pi_v$; also, let $\pi^{-1}:=\big(\pi^{-1}_v:v\in V(G)\big)$.  Now, let $D$ be a drawing of~$G$ in an oriented surface~$\surf$.  The \emph{rotation system} of~$D$ is obtained by recording the clockwise ordering of the edges around each vertex in~$D$.  Let $G_0$ be a connected component of~$G$.  We say that the rotation system of~$G_0$ in~$D$ is \emph{induced} by $\pi$ if, for each vertex~$v$ of~$G_0$, the cyclic ordering of the edges around~$v$ in~$D$ equals~$\pi_v$.

In this section, we study the following variant of the \DCP:
\Prob{\DCROT{}}%
{a positive integer $c$; a colored graph~$G$ with colors in~$\{1,\ldots,c\}$; a rotation system of~$G$}%
{Is there a colored drawing~$D\in\DD$ of the colored graph~$G$ such that, for each connected component~$G_0$ of~$G$, the rotation system of~$G_0$ in~$D$ is induced either by $\pi$ or by~$\pi^{-1}$?}

In order to obtain an FPT algorithm for this problem, we need to strengthen the stability condition for~$\DD$.  Let $D$ be a drawing of a graph~$G$.  Let $G'$ be a matching with the same edge set as~$G$.  We define a drawing~$D'$ of~$G'$ as follows.  In~$D$, each edge~$e$ of~$G$ is drawn as a curve~$c$, and in~$D'$ the corresponding edge of~$G'$ is drawn as a subcurve~$c'$ of~$c$ obtained by removing from~$c$ small initial and final parts of~$c$; in particular, $c'$ contains all intersection points of~$c$.  The drawing~$D'$ is unique up to homeomorphism, and we call it the \emph{detachment} of~$D$.  A class $\DD$ of drawings is \emph{detachably stable} if (1) $\DD$ is stable and (2) a drawing~$D$ lies in~$\DD$ \emph{if and only if} its detachment lies in~$\DD$.  

We obtain the following result:
\begin{theorem}\label{thm:mainrot}
  Let $c$ be a positive integer and $\surf$ be an orientable surface of topological size~$s$.   Let $\DD$ be a detachably stable class of drawings with colors in~$\{1,\ldots,c\}$ on~$\surf$, specified by an integer~$r$ that is an upper bound on the intersecting size of~$\DD$, and by an algorithm that, in time $\OO(\delta(j))$, decides membership in~$\DD$ of a representation of a colored drawing of size~at~most~$j$.

  Then, the \DCROT{} can be solved in time $2^{\poly(c+s+r)}\cdot\delta(\OO(cr^2s^2))\cdot n^2$, or in $f(c+s+r)\cdot n$ time for some function~$f$, where $n$ is the size of~$G$.
\end{theorem}  

The proof of \Cref{thm:mainrot} is a reduction from the \DCP{}, and in particular the remarks listed after \Cref{thm:main} are also valid for \Cref{thm:mainrot}.

\subsection{Proof of \Cref{thm:mainrot}}

We now turn to the proof of \Cref{thm:mainrot}.  Given the input $(G,\pi)$ to the \DCP{}, we thicken~$G$ using grids in order to make it rigid according to the input rotation system~$\pi$.

We start with some preliminaries.  Let $m,n\ge2$ be two integers.  The \emph{grid with $m$ rows and $n$ columns} is the Cartesian product of two paths on $m$ and $n$ vertices. That is, its vertex set is $\{(i,j): 1\leq i\leq m,\, 1\leq j\leq n\}$, and vertices $(i,j)$ and $(i',j')$ are adjacent if and only if $|i-i'|+|j-j'|=1$.  A \emph{cell} is a \emph{directed} cycle of the grid of the form $(i,j),(i+1,j),(i+1,j+1),(i,j+1)$.  The $k$-th \emph{vertical path} is induced by the vertices $(i,k)$, $i=1,\ldots,m$, while the $k$-th \emph{horizontal path} is induced by the vertices $(k,j)$, $j=1,\ldots,n$.  The \emph{top path} of the grid is the first horizontal path.  A \emph{horizontal (respectively, vertical) strip} is induced by the vertices of two consecutive horizontal (respectively, vertical) paths.  A cell of a grid embedded in~$\surf$ is \emph{flat} if it bounds a face of the grid that is homeomorphic to a disk in~$\surf$.  A strip is \emph{flat} if all its cells are.  The following lemma is a simple consequence of a lemma by Geelen, Richter, and Salazar~\cite{DBLP:journals/ejc/GeelenRS04}.
\begin{lemma}\label{lem:GRSgrids}
   Let $G$ be the ($m\times n$)-grid, where $m,n\geq3$, embedded in a (possibly non-orientable) surface $\surf$ with topological size~$s$.  Then, the number of cells of $G$ that are not flat is at most $9s$.
\end{lemma}
\begin{proof}
  Geelen, Richter, and Salazar {\cite[Lemma~1]{DBLP:journals/ejc/GeelenRS04}} proved that, under the hypotheses of the lemma but assuming, in addition, that $\surf$ has no boundary, the number of noncontractible cells in~$G$ is at most $9g$, where $g$ is the Euler genus of~$\surf$.  (Actually, they only consider square grids, but it is immediate from their proof that the condition $m,n\geq3$ is sufficient for the result to hold.)

  Since $G$ is connected, we may without loss of generality assume $\surf$ connected.  Let $g$ and~$b$ be the genus and number of boundary components of~$\surf$, respectively, so that $s=1+g+b$.  Let $\surf'$ be obtained from~$\surf$ by attaching a crosscap (a Möbius strip) to each boundary component of~$\surf$; thus $\surf'$ has no boundary, and its Euler genus is~$s-1$.  By the above lemma by Geelen, Richter, and Salazar, at most $9(s-1)$ cells of~$G$ are noncontractible in $\surf'$.  Each remaining cell~$C$ is contractible in~$\surf'$, hence separating in~$\surf'$, hence bounds a face of~$G$ in~$\surf'$, because $G-C$ is connected.

  So, each cell of~$G$, with at most $9(s-1)$ exceptions, is contractible and bounds a face in~$\surf$.  If every such cell bounds a face that is homeomorphic to a disk, then we are done.  Otherwise, there is a cell~$C$ that splits $\surf$ into a disk and another surface, and all the remaining part of~$G$ lies inside the disk.  Since $G$ is 3-connected, it has a unique combinatorial embedding in the sphere, so all cells of~$G$ except~$C$ bound a face homeomorphic to a disk in~$\surf$.  The result follows since $s\ge1$.
\end{proof}

  Let $(G,\pi)$ be an instance of the \DCROT{}.  We define a class $\DD''$ and a graph $G''$ such that $(G,\pi)$ is a positive instance of the \DCROT{} if and only if $G''$ is a positive instance of the \DxC{$\DD''$} \textsc{Problem}.

\begin{figure}%
    \centering
    \def\svgwidth{\linewidth}
\begingroup%
  \makeatletter%
  \providecommand\color[2][]{%
    \errmessage{(Inkscape) Color is used for the text in Inkscape, but the package 'color.sty' is not loaded}%
    \renewcommand\color[2][]{}%
  }%
  \providecommand\transparent[1]{%
    \errmessage{(Inkscape) Transparency is used (non-zero) for the text in Inkscape, but the package 'transparent.sty' is not loaded}%
    \renewcommand\transparent[1]{}%
  }%
  \providecommand\rotatebox[2]{#2}%
  \newcommand*\fsize{\dimexpr\f@size pt\relax}%
  \newcommand*\lineheight[1]{\fontsize{\fsize}{#1\fsize}\selectfont}%
  \ifx\svgwidth\undefined%
    \setlength{\unitlength}{645.13413462bp}%
    \ifx\svgscale\undefined%
      \relax%
    \else%
      \setlength{\unitlength}{\unitlength * \real{\svgscale}}%
    \fi%
  \else%
    \setlength{\unitlength}{\svgwidth}%
  \fi%
  \global\let\svgwidth\undefined%
  \global\let\svgscale\undefined%
  \makeatother%
  \begin{picture}(1,0.27501964)%
    \lineheight{1}%
    \setlength\tabcolsep{0pt}%
    \put(0,0){\includegraphics[width=\unitlength,page=1]{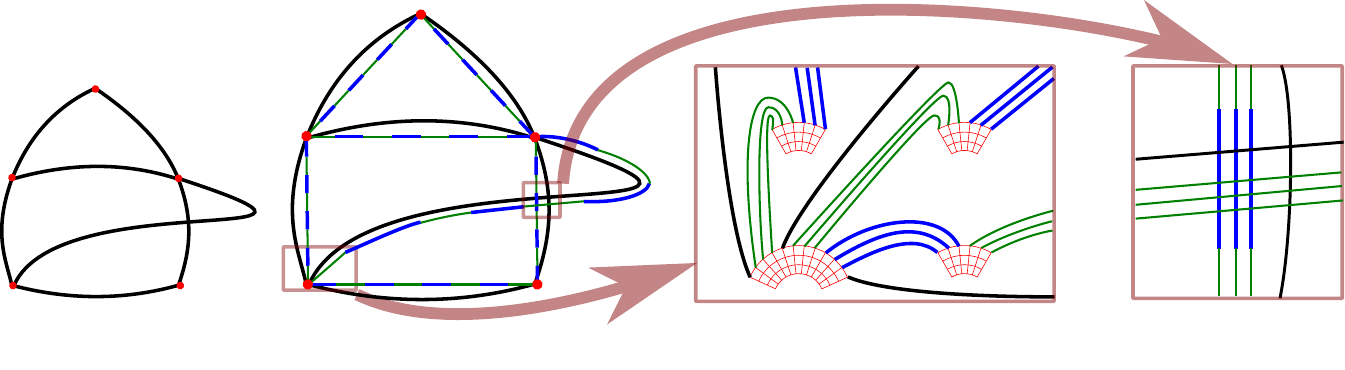}}%
    \put(0.30056143,0.00408709){\color[rgb]{0,0,0}\makebox(0,0)[lt]{\lineheight{1.49999988}\smash{\begin{tabular}[t]{l}(b)\end{tabular}}}}%
    \put(0.61449311,0.00408709){\color[rgb]{0,0,0}\makebox(0,0)[lt]{\lineheight{1.49999988}\smash{\begin{tabular}[t]{l}(c)\end{tabular}}}}%
    \put(0.89390136,0.00408709){\color[rgb]{0,0,0}\makebox(0,0)[lt]{\lineheight{1.49999988}\smash{\begin{tabular}[t]{l}(d)\end{tabular}}}}%
    \put(0.04989844,0.00408709){\color[rgb]{0,0,0}\makebox(0,0)[lt]{\lineheight{1.49999988}\smash{\begin{tabular}[t]{l}(a)\end{tabular}}}}%
  \end{picture}%
\endgroup%

    \caption{The construction of the graphs $G'$ and~$G''$ for the proof of \Cref{prop:CCCNrot}, illustrated with $r=4$ and $k=3$. (a) The initial graph~$G$, drawn on~$\surf$ with a single intersection point, a crossing.  In this example, all edges have a single color, black.  (b) The construction of~$G'$ from~$G$.  (c,d) The construction of~$G''$ from~$G'$, shown only in the bottom left part of~$G'$ and in the neighborhood of the unique crossing between two edges of~$G$.}
    \label{fig:CCCNrot}
\end{figure}

  The construction of~$G''$ is illustrated in \Cref{fig:CCCNrot}.  This graph is essentially a ``thickened version'' of~$G$.  It has three more colors than~$G$; color~$c+1$ is called \emph{red}, $c+2$ \emph{green}, and $c+3$ \emph{blue}.  Each vertex of~$G$ corresponds to a red grid in~$G''$; each edge of~$G$ corresponds to an edge, of the same color, in~$G''$ and additionally to a set of red, green, and blue edges connecting the corresponding red grids, to force the consistency of orientation between the two corresponding red grids.  As suggested by \Cref{fig:CCCNrot}, a drawing~$D$ of~$G$ in which the rotation system of each connected component is induced either by $\pi$ or~$\pi^{-1}$ leads naturally to a drawing~$D''$ of~$G''$ whose crossings arise from the crossings of~$D$.  The class~$\DD''$ is defined, also with the additional colors, in such a way that it contains the drawing~$D''$, for each $D\in\ca D$.  In particular, the red edges cannot bear any intersection point, and no two green or two blue edges intersect.

  We now provide the details of the construction of~$G''$.  Recall that $r$ is an upper bound on the intersecting size of $\DD$.  Let $k:=9s+3$.
  \begin{itemize}
    \item In a first step (\Cref{fig:CCCNrot}(b)), we compute a colored graph~$G'$ with a rotation system~$\pi'$.  The graph~$G'$ is obtained from~$G$ by adding, for every edge~$e$ of~$G$, a path~$P'_e$ of length~$2r$ with the same endpoints as~$e$, and coloring its edges alternatively green and blue; edge $e$ keeps its original color, and is called an \emph{original edge}.  We extend the rotation system~$\pi$ of~$G$ in order to provide $G'$ with a rotation system~$\pi'$ as follows.  There is a unique possible cyclic order of edges for the new vertices of~$G'-G$, which have degree two.   So, for each original edge $e=uv$, there remains to insert the first and last edges of each path~$P'_e$ in $\pi_u$ and~$\pi_v$, respectively.  We insert the first edge of~$P'_e$ into the cyclic order $\pi_u$ just \emph{before} the occurrence of~$e$ in~$\pi_u$, and we insert the last edge of~$P'_e$ into the cyclic order $\pi_v$ just \emph{after} the occurrence of~$e$ in~$\pi_v$.  Intuitively, this means that $e$ and~$P'_e$ run parallel to each other ($P'_e$ to the left of~$e$; the construction would thus change if the endpoints $u$ and~$v$ of~$e$ were swapped, but this is insignificant).  Let $\pi'$ be the resulting rotation system.
    \item
    The remaining part of the construction ``thickens'' $G'$ in a way that we now describe (\Cref{fig:CCCNrot}(c)).  For each vertex $v$ of~$G'$ we do the following.  We create a copy~$G''_v$ of the $(k\times n_v)$-grid, for some integer~$n_v$ specified below, in which all edges are colored red.  Let $e_1,\ldots,e_d$ be the edges of~$G'$ incident to~$v$ in the prescribed clockwise order~$\pi'_v$, and $T''_v$ the top path of the grid~$G''_v$.  We partition the vertex set of~$T''_v$ into sets of consecutive vertices $T''_{v,e_1},\ldots,T''_{v,e_d}$, in this order along~$T''_v$, such that $T''_{v,e_i}$ has size one if $e_i$ is an original edge, and size $k$ otherwise.  In particular, $n_v=|T''_v|$ is the sum of these sizes.  (We may clearly assume from the beginning that $G$ has no isolated vertex, so $n_v\ge3$.)
    \item For each original edge $e=uv\in E(G')$, we connect $T''_{u,e}$ and~$T''_{v,e}$ via an edge of the same color as~$e$, still called an original edge.  For each non-original edge $e=uv\in E(G')$, we create a set $Q''_e$ of~$k$ edges, of the same color as~$e$ (either green or blue), connecting, for $i=1,\ldots,k$, the $i$-th vertex of~$T''_{u,e}$ with the $(k+1-i)$-th vertex of~$T''_{v,e}$.
\end{itemize}
  Let $G''$ be the resulting graph.

  There remains to define the class~$\DD''$.  Recall that a \emph{crossing} is an intersection point of multiplicity two that cannot be removed by a local perturbation.  A drawing~$D''$ with colors in~$\{1,\ldots,c+3\}$ belongs to~$\DD''$ if and only if all the following conditions are satisfied:
  \begin{itemize}
      \item removing all red, green, and blue edges from~$D''$ yields a drawing in~$\DD$;
      \item each red edge in~$D''$ bears no intersection point;
      \item each green edge in~$D''$ bears only intersection points that are crossings, and not with a red or green edge; 
      symmetrically, each blue edge in~$D''$ bears only intersection points that are crossings, and not with a red or blue edge; 
      \item the number of crossings involving a green or a blue edge (or both) is bounded from above by~$(k+1)^2r^2$, where $r$ is an upper bound on the intersecting size of~$\DD$.
  \end{itemize}
  
  In the two following lemmas, we prove that the reduction is valid, namely, that $(G,\pi)$ is a positive instance of the \DCROT{} if and only if $G''$ is a positive instance of the \DxC{$\DD''$} \textsc{Problem}.

  \begin{lemma}\label{lem:rot1}
  Let $D$ be a colored drawing of~$G$ that belongs to~$\DD$.  Assume that, for each connected component~$G_0$ of~$G$, the rotation system of~$G_0$ in~$D$ is induced by~$\pi$ or~$\pi^{-1}$.  Then some drawing of~$G''$ belongs to~$\DD''$.
  \end{lemma}
  \begin{proof}
  Each connected component of~$G$ is either \emph{unflipped}, if in the drawing~$D$ its rotation system is induced by~$\pi$, or \emph{flipped}, if it is induced by~$\pi^{-1}$.  First, we define a drawing~$D'$ of~$G'$ as follows (\Cref{fig:CCCNrot}(a,b)).  We draw each path $P'_e$ very close to the corresponding original edge~$e$, on the appropriate side of~$e$, so that the rotation system of~$D'$ is precisely~$\pi'$ on unflipped connected components and $\pi^{\prime -1}$ on flipped connected components.  Moreover, we do this in such a way that all the intersection points on~$P'_e$ are crossings, and that each intersection point of multiplicity~$\ell$ in~$D$ corresponds to at most $\ell$ crossings of~$P'_e$ with original edges and $\ell-1$ crossings of~$P'_e$ with some paths~$P'_f$.  See \Cref{fig:highmult} for an illustration, and its caption for the details of the construction.
\begin{figure}%
    \centering
    \def\svgwidth{.4\linewidth}
\begingroup%
  \makeatletter%
  \providecommand\color[2][]{%
    \errmessage{(Inkscape) Color is used for the text in Inkscape, but the package 'color.sty' is not loaded}%
    \renewcommand\color[2][]{}%
  }%
  \providecommand\transparent[1]{%
    \errmessage{(Inkscape) Transparency is used (non-zero) for the text in Inkscape, but the package 'transparent.sty' is not loaded}%
    \renewcommand\transparent[1]{}%
  }%
  \providecommand\rotatebox[2]{#2}%
  \newcommand*\fsize{\dimexpr\f@size pt\relax}%
  \newcommand*\lineheight[1]{\fontsize{\fsize}{#1\fsize}\selectfont}%
  \ifx\svgwidth\undefined%
    \setlength{\unitlength}{121.10003525bp}%
    \ifx\svgscale\undefined%
      \relax%
    \else%
      \setlength{\unitlength}{\unitlength * \real{\svgscale}}%
    \fi%
  \else%
    \setlength{\unitlength}{\svgwidth}%
  \fi%
  \global\let\svgwidth\undefined%
  \global\let\svgscale\undefined%
  \makeatother%
  \begin{picture}(1,1.00002224)%
    \lineheight{1}%
    \setlength\tabcolsep{0pt}%
    \put(0,0){\includegraphics[width=\unitlength,page=1]{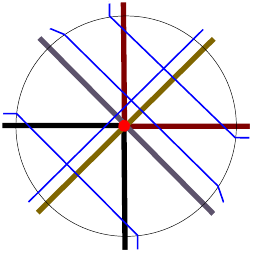}}%
    \put(0.27671108,0.89680486){\color[rgb]{0,0,0}\makebox(0,0)[lt]{\lineheight{1.5}\smash{\begin{tabular}[t]{l}$d$\end{tabular}}}}%
  \end{picture}%
\endgroup%

    \caption{An illustration of the construction of the drawing~$D'$ of~$G'$ in the proof of \Cref{lem:rot1}, in the case where $\ell=4$ and all paths~$P'_e$ are drawn in thin blue lines.  Consider a small disk~$d$ centered at a point~$x$ of multiplicity~$\ell\ge2$ in~$D$.  Up to homeomorphism, we assume that $d$ is a Euclidean disk. We may assume that inside~$d$, the image of~$G$ is the union of radii, whose intersections with the boundary of~$d$ are the vertices of a regular $2\ell$-gon, so that all edges of~$G$ meet at~$x$.  Each path~$P'_e$ enters~$d$ as many times as edge~$e$ goes through~$x$, and we make sure that each time it does so, it uses a straight line segment.  If the intersection points of these paths with the boundary of~$d$ are in general position, then all intersection points except~$x$ are crossings, and each line segment crosses at most half of the~$2\ell$ radii and at most once each of the $\ell-1$ other segments.}
    \label{fig:highmult}
\end{figure}

  There remains to specify where to place the inner vertices of each path~$P'_e$ along~$P'_e$.  Since the sum of the multiplicities of the crossings of~$G$ is at most~$r$, when walking along a path~$P'_e$, one visits at most $r$ crossings with the paths~$P'_f$.  Recall that $P'_e$ is a path made of $2r$~edges.  We put the $i$-th crossing point along~$P'_e$ either on the $(2i-1)$-th or $(2i)$-th edge of~$P'_e$ (one of which is green and the other is blue), in such a way that each crossing between two paths, $P'_e$ and~$P'_f$, is a crossing between a green edge and a blue edge. See \Cref{fig:CCCNrot}(b).  (We do not worry about the crossings with the original edges.)

  Now, from $D'$, we compute a drawing~$D''$ of~$G''$ by replacing each vertex~$v$ of~$G''$ with the $(k\times n_v)$-grid $G''_v$ in a small neighborhood of the image of~$v$, with its orientation reversed if and only if the corresponding connected component is flipped.  Similarly, for each edge $e$ of~$G'$, we build the corresponding edge(s) of~$G''$ in a small neighborhood of the image of~$e$.  See \Cref{fig:CCCNrot}(c,d).

  To conclude the proof, it suffices to check that $D''$ belongs to~$\DD''$:
  \begin{itemize}
  \item Let $D_1$ denote the drawing obtained from $D''$ by removing all red, green, and blue edges. Then $D_1$ is the detachment of~$D\in\DD$, and hence lies in~$\DD$ since $\DD$ is detachably stable.
  \item Each red edge in~$D''$ bears no intersection point.
  \item Each green edge in~$D''$ bears only intersection points that are crossings, and not with a red or green edge.  Symmetrically, each blue edge in~$D''$ bears only intersection points that are crossings, and not with a red or blue edge.
  \item We now bound the number of crossings involving a green or blue edge (or both).  Each intersection point of multiplicity~$\ell$ in~$D$ contributes to at most $(k+1)^2\ell^2$ such crossings, and the sum of the multiplicities of the intersection points in~$D$ is at most~$r$.  So the total number of such crossings is at most~$(k+1)^2r^2$.
  \qedhere
  \end{itemize}
  \end{proof}

  In the converse direction:
  \begin{lemma}\label{lem:rot2}
  If some colored drawing~$D''$ of~$G''$ belongs to~$\DD''$, then there is a drawing~$D$ of~$G$ that belongs to~$\DD$ and such that, for each connected component~$G_0$ of~$G$, the rotation system of~$G_0$ in~$D$ is induced by~$\pi$ or~$\pi^{-1}$.
  \end{lemma}
  \begin{proof}
  We consider the situation in~$D''$.  For each $v\in V(G')$, the grid~$G''_v$ is colored red and is thus embedded on~$\surf$.  By Lemma~\ref{lem:GRSgrids}, it has at least one flat horizontal strip, denoted $S''_v$.  Now, let $e=uv$ be an edge of~$G'$.  The graph $G''_u\cup Q''_e\cup G''_v$ is embedded on~$\surf$, because it is only made of red edges and either green or blue edges.  Moreover, it contains a ($2k\times k$)-grid, whose top half is a subgrid of~$G''_u$ and whose bottom half is a subgrid of~$G''_v$, both halves being connected by~$Q''_e$; each of the vertical strips of this ($2k\times k$)-grid contains a cell of~$S''_u$ and a cell of~$S''_v$.  Again by Lemma~\ref{lem:GRSgrids}, one of these vertical strips, $S''_e$, must be flat.

  Let $K''_e$ be the connected component of~$G''$ containing~$G''_u\cup Q''_e\cup G''_v$.  From the above, we obtain that each cell~$C$ of $S''_u\cup S''_e\cup S''_v$ bounds a disk whose interior is not intersected by the image of~$K''_e$ (since $K''_e-C$ is connected by construction).  If that disk contains part of~$D''$, then it contains one or several entire connected components of~$G''$, which can all be moved out of the disk.  So, without loss of generality, we can assume that each cell~$C$ of $S''_u\cup S''_e\cup S''_v$ bounds a disk~$d(C)$ whose interior is avoided by~$D''$.  In particular, this implies that all the cells of~$S''_u$ and~$S''_v$ are consistently oriented, in the following sense: for all such cells~$C$, the boundaries of the disks~$d(C)$ are visited by the directed cycle~$C$ with the same orientation (clockwise or counterclockwise).
  
  For each original edge~$e=uv$ of~$G''$, let $U''_e$ be the path that is the concatenation of the (unique, red) vertical path, in~$G''_u$, from a top vertex of~$S''_u$ to the source of~$e$, $e$ itself, and the (unique) vertical path from the target of~$e$ to a top vertex of~$S''_v$.  After contracting all the strips $S''_v$ to a point, the paths $U''_e$ together form a drawing~$D$ of the graph~$G$ whose rotation system, restricted to each connected component, is induced by $\pi$ or~$\pi^{-1}$.  To conclude the proof, there remains to prove that $D$ belongs to~$\DD$.

  Let $D_1$ denote the drawing obtained from $D''$ by removing all red, green, and blue edges; by definition of~$\DD''$, we obtain that $D_1$ belongs to~$\DD$.  Moreover, $D_1$ is the detachment of~$D$.  Finally, since $\DD$ is detachably stable, we obtain that $D$ also belongs to~$\DD$, as desired.
\end{proof}

We can now conclude the proof of \Cref{thm:mainrot}.
\begin{proof}[Proof of \Cref{thm:mainrot}]
  The construction of~$G''$ clearly takes $\OO(k^2rn)$ time.  Using the fact that $\DD$ is stable, it is routine to check that $\DD''$ is stable.  Moreover, given a representation of a colored drawing of size at most~$j$, one can decide whether it belongs to~$\DD''$ in time $\OO(\delta(j)+j)$.

  We now bound the intersecting size of~$\DD''$.  A drawing $D''$ in~$\DD''$ has two types of intersection points: (1) the intersection points involving only edges with colors in~$\{1,\ldots,c\}$, whose multiplicities sum up to at most~$r$, the intersecting size of~$\DD$; (2) the crossing points between a green edge and a blue edge, and there are at most $(k+1)^2r^2$ such crossings, whose multiplicities sup up to at most $2(k+1)^2r^2$.  So the intersecting size of~$\DD''$ is at most $r'':=r+2(k+1)^2r^2=\OO(k^2r^2)$.  

  Finally, we can apply \Cref{thm:main} to the class~$\DD''$ and the graph~$G''$, noting that $k=9s+3$.  Using the quadratic-time algorithm, the resulting runtime is $2^{\poly(c+s+r)}\cdot\big(\delta(\OO(cs^2r^2))+\OO(cs^2r^2)\big)\cdot n^2$, which is bounded from above by $2^{\poly(c+s+r)}\cdot\delta(\OO(cr^2s^2))\cdot n^2$.  Alternatively, the linear-time algorithm runs in $f(c+s+r)\cdot n$ time for some function~$f$.  This allows us to decide whether $G''$ has a colored drawing in~$\DD''$.  By Lemmas \ref{lem:rot1} and~\ref{lem:rot2}, this is equivalent to the fact that $(G,\pi)$ is a positive instance of the \DCROT{}.
\end{proof}

We remark that the \DCROT{} admits a formulation for non-orientable surfaces.  In this realm, an instance of this problem is given not only by a colored graph~$G$ and cyclic permutations~$\pi_v$ for each vertex~$v\in V(G)$, but also by a \emph{signature} $\lambda:E(G)\to\pm1$, which prescribes whether traversing an edge ``reverses the orientation''; see Mohar and Thomassen~\cite[Section~3.3]{mt-gs-01}.  Exactly the same arguments as above lead to a version of \Cref{thm:mainrot} for non-orientable surfaces; we omit the details.

\subsection{Applications of \Cref{thm:mainrot}}

We now consider applications of \Cref{thm:mainrot} to well-studied crossing number problems with fixed rotation system.  The traditional crossing number with a fixed rotation system in the plane is \NP-hard~\cite{DBLP:journals/algorithmica/PelsmajerSS11}, and its parameterized complexity has not been considered so far.

A crossing number variant that uses a similar restriction is the {\em homeomorphic joint crossing number} problem, which is similar to the joint crossing number problem, but with constraints on the rotation systems.  More precisely, in that problem, two input graphs $G_1$ and $G_2$ are cellularly embedded in an orientable surface~$\surf$ without boundary (every face is homeomorphic to a disk) and the solution must be composed of embeddings of $G_1$ and $G_2$ that are each homeomorphic to the input one (the rotation systems must be the same, possibly up to reversal).  Given an integer~$t\ge0$, the goal is to determine whether there is a way to achieve this with at most $t$ crossings between $G_1$ and~$G_2$.  The problem is also \NP-hard if the Euler genus of~$\surf$ is at least~12~\cite{DBLP:conf/isaac/HlinenyS15}.  The \emph{orientation-preserving joint crossing number} is similar, except that we restrict ourselves to orientable self-homeomorphisms (the rotation systems of both graphs cannot be flipped).

To capture these problems, we introduce a variant of the \cccn{} where the input also prescribes the rotation system, but as before the input rotation system may be flipped independently for each connected component of~$G$.  Formally:
\smallskip

\Prob{\cccn{}~with Flippable Rotation System}%
{A surface~$\surf$; a symmetric matrix~$M$ of size $c\times c$ with nonnegative integer values; a colored graph~$G$ with colors in~$\{1,\ldots,c\}$; a rotation system of~$G$}%
{Is there a normal colored drawing $D$ of~$G$ in $\surf$ such that:\begin{itemize}\parskip0pt\item for each pair $i,j\in\{1,\ldots,c\}$, the number of crossings in $D$ involving two edges, one colored $i$ and the other~$j$, is at most~$M_{i,j}$, and
\item for each connected component $G_0\subseteq G$, the rotation system of~$G_0$ in~$D$ is induced either by $\pi$ or by $\pi^{-1}$?
\end{itemize}}

We have:
\begin{proposition}\label{prop:CCCNrot}
The \cccn{}~\textsc{with Flippable Rotation System} can be solved in time $2^{\poly(s+t)}\!\cdot n^2$ or in time $f(s+t)\cdot n$ for some function~$f$, where $s$ is the topological size of~$\surf$, $t=\sum_{1\leq i\leq j\leq c}M_{i,j}$, and $n$ is the size of~$G$.
\end{proposition}
\begin{proof}
  The proof is identical to that of \Cref{prop:CCCN}, with \Cref{thm:mainrot} applied instead of \Cref{thm:main}.  (Note that the class~$\DD$ defined in the proof is also detachably stable.)
\end{proof}

As immediate corollaries, we obtain:
\begin{corollary}\label{cor:rotcr}
Let $t\ge0$ be an integer and let $G$ be a graph of size~$n$ with a rotation system.  Solving the traditional crossing number with fixed rotation system for~$G$, namely, deciding whether $G$ can be drawn in the plane with that rotation system and with at most $t$ crossings, can be done in time $2^{\poly(t)}\cdot n^2$ or in time $f(t)\cdot n$ for some function~$f$.
\end{corollary}
\begin{proof}
  We apply the algorithm of \Cref{prop:CCCNrot} for $G$ with the given rotation system, a single color, and the matrix $M=(t)$. If its outcome is negative, then the same answer clearly holds for our problem.
  Otherwise, we obtain a drawing of $G$ with at most $t$ crossings in the plane, in which the rotation system of each connected component is induced either by $\pi$ or $\pi^{-1}$.  Since, in an optimal drawing in the plane, two distinct connected components should not intersect, we can independently flip back the orientation of the components of $G$ which are flipped in the solution, and the outcome of our problem is thus positive.
\end{proof}

\begin{corollary}\label{cor:jointcrh}
  One can decide whether the homeomorphic joint crossing number of graphs $G_1$ and $G_2$, each cellularly embedded in a connected, orientable surface~$\surf$ without boundary, is at most~$t$ in time $2^{\poly(s+t)}\cdot n_0^2$ or in time $f(s+t)\cdot n_0$ for some function~$f$, where $s$ is the topological size of~$\surf$ and $n_0$ is the size of~$G_1\cup G_2$.
\end{corollary}
\begin{proof}
  $G_1$ and~$G_2$ are connected, since they are cellularly embedded on a connected surface.  The result is thus a direct application of~\Cref{prop:CCCNrot} in which $G=G_1\cup G_2$, for $i=1,2$, the edges coming from~$G_i$ are colored~$i$, and $M=\begin{pmatrix}0&t\\t&0\end{pmatrix}$.
\end{proof}

For the orientation-preserving version, we however get a slower runtime:
\begin{corollary}\label{cor:jointopcr}
  One can decide whether the orientation-preserving homeomorphic joint crossing number of graphs $G_1$ and $G_2$, each cellularly embedded in a connected, orientable surface~$\surf$ without boundary, is at most~$t$ in time $2^{\poly(s+t)}\cdot n_0^4$ or in time $f(s+t)\cdot n_0^3$ for some function~$f$, where $s$ is the topological size of~$\surf$ and $n_0$ is the size of~$G_1\cup G_2$.
\end{corollary}
\begin{proof}
  The disconnected graph~$G:=G_1\cup G_2$ comes equipped with a rotation system but, unlike in the proof of \Cref{cor:jointcrh}, we have to ensure that the components of $G$ cannot be flipped.
  We extend~$G$ to a connected graph~$\hat G$ with rotation system, by subdividing an edge $e_1$ of~$G_1$ and an edge $e_2$ of~$G_2$, thus inserting vertices $v_1$ and~$v_2$, and by then adding a new edge~$e$ between $v_1$ and~$v_2$.  There are $\OO(n_0^2)$ possibilities for the choice of~$\hat G$; we emphasize that it is not sufficient to choose the edges $e_1$ and~$e_2$, but we also need to select the rotation system at $v_1$ and~$v_2$ (since both have degree three, there are two possibilities for each).

  For each such choice of~$\hat G$, we assign color~$i$ to the edges coming from~$G_i$, $i=1,2$ (including the subdivided edges), and we assign color~3 to the new edge~$e$.  Then we apply \Cref{prop:CCCNrot} with $M=\begin{pmatrix}0&t&0\\t&0&0\\0&0&0\end{pmatrix}$; this takes the claimed running time.

  Since $\hat G$ is connected, if at least one choice of~$\hat G$ gives a positive outcome, then we have a solution of the orientation-preserving homeomorphic joint crossing number.  Conversely, in any drawing of $G$ on~$\surf$ such that $G_1$ and~$G_2$ are both embedded, there exist two points $p_1$ and~$p_2$, on two edges $e_1$ and~$e_2$ of $G_1$ and~$G_2$ respectively, that can be connected by a path~$q$ that avoids the images of $G_1$ and~$G_2$.  Subdividing $e_1$ and~$e_2$, and drawing $v_i$ on~$p_i$ ($i=1,2$) and $e$ on~$q$, yields a drawing of a suitable choice of~$\hat G$.  The application of \Cref{prop:CCCNrot} to the corresponding graph~$\hat G$ gives a positive outcome.
\end{proof}

\section{Computing drawings for positive instances}\label{sec:compute}

In this section, we prove the following result.
\begin{theorem}\label{thm:explicit}
  For every positive instance of the \DCREM{} problem, we can compute a representation of the corresponding drawing, of size the size $n$ of the input graph times a polynomial in $c+s+r+p+q$,
  in time $2^{\poly(c+s+r+p+q)}\cdot\delta(\OO(s+r)c)\cdot n^2$ (i.e., without overhead compared to the runtime of the quadratic-time algorithm in \Cref{thm:main}).
\end{theorem}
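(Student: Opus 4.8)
The plan is to run the reduction underlying \Cref{thm:mainmod} \emph{backwards}. Recall that the algorithm of \Cref{thm:mainmod} answers ``yes'' exactly when, for some $2$-complex $\complex_5\in\Gamma(r,p,q_1,\ldots,q_c)$, the uncolored graph $G_2:=G_2(G,r,p,q_1,\ldots,q_c)$ embeds in $\complex_5$ (tested via \Cref{thm:embedding}); on a positive instance the decision procedure has already identified such a witnessing $\complex_5$ together with the drawing $D^\times$ and the guesses $x_i^j,y_i^j$, partition of $p$, and points $z_i^j$ defining it. It therefore suffices to (i) obtain an \emph{actual} embedding of $G_2$ in that $\complex_5$, of combinatorial size linear in $n$ (times the parameter polynomial), and (ii) retrace the proof of \Cref{lem:embedding-to-drawing-mod}, turning each ``we may assume'' and each ``it follows that $G$ embeds in $\complex_k$'' into an explicit manipulation of the embedding, ending with a representation $(T',W')$ of a drawing $D'$ of a graph $G'$ on $\surf$ with $\kappa(D')\le r$, together with the list of vertex splits and edge removals realizing $G\rightsquigarrow G'$.

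For step~(i): the embeddability algorithm of Colin de Verdi\`ere and Magnard is constructive, i.e., for a positive instance it returns an embedding of the input graph in the input $2$-complex, described as a subgraph of the vertex--edge graph of a triangulation of that $2$-complex, within the time bound of \Cref{thm:embedding}; alternatively, such a witness can be extracted from the decision oracle by a polynomial-length self-reduction. Since edges of $G_2$ are embedded pairwise disjointly, we may take the triangulation fine enough that every triangulation edge is used at most once by the union of the (walks representing the) edges of $G_2$; then the total length of those walks is bounded by the number of triangulation edges, and that triangulation can be taken of size $\OO(|E(G_2)|+C)$ where $C=|\text{simplices of }\complex_5|=\poly(s+c+p+q+\lambda(r))$. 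As $|E(G_2)|=\OO((c+p+q+\lambda(r))\,n)$, the initial embedding already has size $n\cdot\poly(s+c+p+q+\lambda(r))$.

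For step~(ii): we process the chain $\complex_5\to\complex_4\to\complex_3\to\complex_2\to\complex_1\to\surf$ exactly as in the proof of \Cref{lem:embedding-to-drawing-mod}. First, using the minimum-degree-three property of the vertices of $G$ inside $G_1$, we perform the local reroutings of necklace edges so that each necklace of $\complex_5$ is either unused or used by a necklace of $G_2$ of the same type; each such reroute is an explicit local surgery on the triangulation and on the embedded walks, and all of them together enlarge the representation by at most a factor polynomial in the parameters. Contracting the necklaces yields an explicit embedding of $G_1$, hence of $G$, in $\complex_4$. Undoing the $p$ identifications defining $\complex_4$: whenever an identified point $z_i$ meets the relative interior of an edge of $G$ we record a vertex split (and shrink the new edge off the singular point), and whenever $z_i$ is a vertex of $G$ we record at most $p_i$ vertex splits there, producing an explicit embedding of a graph $G''$ (obtained from $G$ by at most $p$ splits) in $\complex_3$; deleting the edges of $G''$ that run through $\complex_3\setminus\complex_2$ records at most $q_i$ removed edges of color $i$ and leaves an embedding of $G'$ in $\complex_2$. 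Finally, $\complex_2$ is $\complex_1$ with one isolated edge per edge of $D^\times$, and $\complex_1$ is $\surf$ cut along $D^\times$ (see \Cref{sec:storing} and \Cref{sec:cutting}); re-applying \emph{all} the gluings of the triangulation $T$ representing $D^\times$ turns the (refined) surface part of $\complex_2$ into a triangulation $T'$ of $\surf$ refining $T$, and each of the at most $\lambda(r)$ isolated edges of $\complex_2$ actually used by $G'$ is rerouted in $T'$ parallel to the corresponding edge of $D^\times$ (these being pairwise edge-disjoint walks, a single refinement of $T'$ by a polynomial factor provides the needed parallel tracks). The result is a representation $(T',W')$ of a drawing $D'$ of $G'$ on $\surf$ all of whose intersections are inherited from $D^\times$, so $\kappa(D')\le\kappa(D^\times)\le r$; its size is $n$ times a polynomial in $s+c+p+q+\lambda(r)$, and the whole post-processing runs in time $n\cdot\poly(s+c+p+q+\lambda(r))$, i.e., with no overhead over \Cref{thm:mainmod}.

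I expect the only real obstacle to be step~(i): guaranteeing that \Cref{thm:embedding} (as used from \cite{cm-faeg2-24}) delivers, inside the same running time, an embedding whose combinatorial description has size $\OO(|E(G_2)|\cdot\poly(C))$ rather than something larger; once a witness of this controlled size is available, step~(ii) is a routine, if tedious, sequence of local surgeries, each visibly algorithmic and each increasing the representation size by at most a factor polynomial in the parameters, so the final representation and the running time meet the stated bounds.
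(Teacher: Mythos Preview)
Your proposal is correct and follows essentially the same approach as the paper: invoke the constructive version of the embeddability algorithm of~\cite{cm-faeg2-24} on the witnessing~$\complex_5$, and then retrace the proof of \Cref{lem:embedding-to-drawing-mod} as an explicit sequence of local surgeries on the embedding. The obstacle you flag in step~(i) is exactly what the paper addresses by citing \cite[arXiv version, Theorem~9.1]{cm-faeg2-24} (restated there as \Cref{thm:compute}), which guarantees, within the same runtime, an embedding of a graph~$H$ obtained from~$G_2$ by adding $\OO(C)$ vertices and edges and performing $\OO(C)$ edge subdivisions, returned as a combinatorial map on the surface part together with vertex/edge sequences along the isolated edges---so the witness size is indeed $n\cdot\poly(s+c+p+q+\lambda(r))$.
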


We remark that all FPT algorithms for crossing number variants in this paper are reductions to the \DCREM{} problem.  So, for any positive instance of such a variant, \Cref{thm:explicit} allows us to compute a corresponding drawing.

We need some preliminaries.  A graph embedding on a surface is \emph{cellular} if all its faces are homeomorphic to open disks.  Cellular graph embeddings can be manipulated through their \emph{combinatorial maps}.  Algorithmically, refined data structures such as \emph{graph-encoded maps}~\cite{l-gem-82} (see also~\cite[Section~2]{e-dgteg-03}) are used.

The idea of the proof of Theorem~\ref{thm:explicit} is the following.  The algorithm by Colin de Verdière and Magnard~\cite{cm-faeg2-26} decides whether a graph embeds on a 2-complex (\Cref{thm:embedding}), but also for positive instances it is possible to compute a representation of an embedding~\cite[Theorem~9.1]{cm-faeg2-26}, as restated below.  There remains to turn this embedding into~$\complex$ into a drawing.
\begin{theorem}[{Colin de Verdière and Magnard~\cite[Theorem~9.1]{cm-faeg2-26}}]%
\label{thm:compute}
  In Theorem~\ref{thm:embedding}, if $G$ has an embedding into~$\complex$, and $\complex$ has no edge incident to three triangles and no isolated vertex, an embedding of~$G$ can be computed in time $2^{\poly(C)}\cdot n^2$.   In detail, an embedding of a graph~$H$ is computed where:
  \begin{itemize}
  \item $H$ is obtained from~$G$ by augmenting it with at most $2C$ vertices and at most $3C+2u$ edges, and performing at most $C$ edge subdivisions, where $u$ is the number of connected components of~$G$;
  \item the images of the vertices of~$H$ cover the singular points of~$\complex$;
  \item the restriction of~$H$ to the surface part of~$\complex$ is cellular, specified by its combinatorial map, and by which point is mapped to which singular point of~$\complex$;
  \item the restriction of~$H$ to the isolated edges of~$\complex$ is specified by the sequence of vertices and edges of~$H$ appearing along each isolated edge of~$\complex$.
  \end{itemize}
\end{theorem}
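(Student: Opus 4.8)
The plan is to run the algorithm of \Cref{thm:mainmod} and, for a positive instance, convert the \emph{witness} it produces into an explicit representation of a drawing, by making the correctness argument of \Cref{lem:embedding-to-drawing-mod} algorithmic. The witness consists of a 2-complex $\complex_5\in\Gamma$ together with the guesses that define it: the drawing $D^\times$ and its representation $(T_0,W^\times)$ from \Cref{lem:enum}, the points $x_i^j,y_i^j,z_i^j$, and the partition of $p$. First I would invoke \Cref{thm:compute} (rather than just \Cref{thm:embedding}) on $\complex_5$; this is legitimate since, by construction, every edge of $\complex_5$ is incident to at most two triangles, and any isolated vertex of $\complex_5$ can be deleted beforehand. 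Without runtime overhead, this returns an explicit embedding of an augmented graph $H$ of size $\OO((c+p+q+\lambda(r))\,n)$ into $\complex_5$, described by the combinatorial map of its restriction to the surface part of $\complex_5$, by the sequence of vertices and edges of $H$ along each isolated edge, and by the correspondence between vertices of $H$ and singular points of $\complex_5$.

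Next I would carry out, as explicit operations on this combinatorial data, the chain of simplifications in the proof of \Cref{lem:embedding-to-drawing-mod}. The claims there --- that each necklace of $\complex_5$ is either unused or used by a necklace of $G_2$ of the same type --- are established by local redrawings: pushing a ``bead'' loop into a small neighborhood of its (surface-part) vertex, and rerouting one edge out of a bundle of at least $p+2q+2\lambda(r)+1$ parallel edges along a sibling that avoids the at most $p+2q+2\lambda(r)$ relevant singular points. Each such redrawing is a polynomial-time manipulation of the combinatorial representation; performing them all and then contracting each necklace back to a single edge yields an embedding of $G_1$, hence of $G$, into $\complex_4$. I would then undo the $z_i$-identifications exactly as in the lemma (performing at most $p$ vertex splits on $G$ and splitting each $z_i$ into $p_i+1$ points), obtaining an embedding into $\complex_3$ of a graph $G''$ arising from $G$ by at most $p$ splits, and finally delete the at most $q$ edges that lie in $\complex_3\setminus\complex_2$, obtaining an embedding of the desired graph $G'$ into $\complex_2$.

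It remains to transfer this embedding from $\complex_2$ to $\surf$ and to package it. Gluing $\complex_2$ back onto $\surf$ reverses the cut along $D^\times$: an edge of $G'$ running through the surface part of $\complex_2$ stays put, while an edge of $G'$ occupying an isolated edge of color $i$ of $\complex_2$ is redrawn as a parallel copy of the corresponding edge of $D^\times$ in $\surf$; this produces the drawing $D'$ of $G'$, with $\kappa(D')\le\kappa(D^\times)\le r$, and the gluing is again a manipulation of combinatorial data. To obtain a representation $(T,W)$, I would start from the triangulation $T_0$ of $\surf$ underlying $(T_0,W^\times)$, transport the surface-part portion of $G'$ onto it through the gluing map, refine $T_0$ by the images of the vertices and crossing points of $D'$ so that each of its pieces becomes a walk and the walks are pairwise edge-disjoint, and restore the triangulation property by a bounded number of barycentric subdivisions, as in \Cref{lem:represent-drawings}; the color of each walk is copied from the edge of $G'$ it carries. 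All operations run in time polynomial in $s+c+p+q+\lambda(r)$ and in $n$, and the output $(T,W)$ has size $n\cdot\poly(s+c+p+q+\lambda(r))$, so the whole computation stays within the budget of \Cref{thm:mainmod}.

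The step I expect to be the main obstacle is turning the ``without loss of generality'' redrawings of \Cref{lem:embedding-to-drawing-mod} into genuine combinatorial-map operations with a controlled size blow-up, together with assembling a single global triangulation of $\surf$ out of the heterogeneous outputs of \Cref{thm:compute} --- a combinatorial map on the surface part, linear sequences along the isolated edges, and a singular-point correspondence --- and the representation of $D^\times$, all while keeping the walks edge-disjoint.
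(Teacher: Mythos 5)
Your proposal does not prove the statement in question; it proves a different one. \Cref{thm:compute} is a claim about the embeddability algorithm of Colin de Verdi\`ere and Magnard itself: that their decision procedure for embedding a graph $G$ into a 2-complex $\complex$ (\Cref{thm:embedding}) can be upgraded, at no asymptotic cost, to \emph{output} an explicit embedding of a controlled augmentation $H$ of $G$, with the stated quantitative bounds (at most $2C$ extra vertices, $3C+2u$ extra edges, $C$ subdivisions) and the stated output format (cellular combinatorial map on the surface part, vertex--edge sequences along the isolated edges, vertices covering the singular points). The present paper does not prove this; it imports it verbatim as Theorem~9.1 of the arXiv version of~\cite{cm-faeg2-24}, remarking only that it follows from a careful inspection of the proof of \Cref{thm:embedding} given there. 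A genuine proof would have to open up that algorithm --- how it processes the 2-complex and how a positive answer is certified --- which lies entirely outside the machinery of this paper.

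What you have written is, in substance, the paper's proof of \Cref{thm:explicit} (computing a representation of a drawing for positive \MDCN{} instances): obtain an embedding of $H$ into $\complex_5$, make the ``without loss of generality'' redrawings of \Cref{lem:embedding-to-drawing-mod} algorithmic, undo the identifications and the necklaces, and glue $\complex_2$ back onto $\surf$ to produce a representation $(T,W)$. That reconstruction is reasonable and close to what the paper actually does for \Cref{thm:explicit}. But as a proof of \Cref{thm:compute} it is circular: your very first step is to ``invoke \Cref{thm:compute} (rather than just \Cref{thm:embedding}) on $\complex_5$'', i.e., you assume exactly the statement to be established. Nothing in the subsequent steps derives the existence of the augmented graph $H$, its size bounds, or the combinatorial output format --- all of that is consumed as given. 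So the target statement remains unaddressed.
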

Some comments on this theorem are in order.  First, the requirements on the 2-complex~$\complex$ are satisfied for all 2-complexes considered in this paper.  Second, it is necessary to add vertices to~$H$ in order to cover the singular points of~$\complex$, to add edges in order to make the graph cellularly embedded on the surface part of~$\complex$, and to subdivide edges in order to ensure that each edge of the graph is either entirely inside, or entirely outside, the surface part of~$\complex$.

\begin{proof}[Proof of \Cref{thm:explicit}]
  We run the quadratic-time algorithm of \Cref{thm:main} on the given instance $(G,p,q_1,\ldots,q_c)$ of the \DC{} problem.  Recall from the proof of that theorem that the corresponding graph~$G_2$ embeds into at least one 2-complex~$\complex_5\in\Gamma$, where $G_2$ has size $\OO((c+r+p+q)n)$ and $\complex_5$ has size $\OO(s+(c+r+p+q)r)$.  We then apply~\Cref{thm:compute}, obtaining, without overhead in the runtime, a graph~$H$ embedded on~$\complex_5$ whose size is that of~$G$ times a polynomial in $c+s+r+p+q$.

  The proof basically follows the same steps as the proof of \Cref{lem:embedding-to-drawing}.  First, up to modifying the embedding, we can assume that each necklace of~$\complex_5$ is either not used at all, or used by a necklace of~$H$ of the same type.  We thus have an embedding of a subgraph~$H_1$ of~$H$ (a supergraph of~$G_1$, in which some edges may be subdivided) in such a way that each isolated edge of~$\complex_4$ coincides with an edge of~$G_1$ of the same color, or is not used at all.

  In a second step, we perform the vertex splits as described in the proof of \Cref{lem:embedding-to-drawing}, resulting in an embedding in~$\complex_3$  of a graph~$H''$ obtained from~$H_1$ by at most $p$ vertex splits.  A minor detail: In the case where a singular point~$z_i$, $i\in S$, is used by the relative interior of an edge~$e=u$ of~$H$, the proof of \Cref{lem:embedding-to-drawing} performs a split of one of its vertices, say~$v$, as prescribed, by replacing $uv$ with $uv'$ where $v'$ is a new vertex and shrinking $uv'$ without moving~$u$ so that in the new embedding, $uv'$ does not contain any singular point of~$\complex_4$ in its interior.  In order to keep the graph cellular on the surface part, we cannot remove the edges on the surface part of the complex during this process, but now mark them as extra edges.

  In a third step, we remove the edges of the graph~$H''$ using edges in $\complex_3\setminus\complex_2$.  The resulting graph~$H'$, embedded into~$\complex_2$, is a supergraph, possibly subdivided, of a graph~$G'$ obtained from~$G$ by at most $p$ vertex splits and by removing, for each~$i$, at most $q_i$ edges of color~$i$.  Moreover, again, each isolated of~$\complex_2$ of color~$i$ coincides with an edge of~$G'$ of color~$i$, or is not used at all.  
  This embedding of~$H'$ in~$\complex_2$ turns into an embedding, in~$\surf$, of a graph~$H$ containing a drawing~$D'$ of~$G'$ in~$\DD$ by the following steps:
  \begin{enumerate}
  \item turn $\complex_2$ into~$\surf$ by reversing the process described in \Cref{lem:embedding-to-drawing}, namely, removing the isolated edges and gluing back the edges along which $\surf$ was cut; this results in a graph~$H$ cellularly embedded on~$\surf$;
  \item reinstate the edges of~$G'$ that were using isolated edges of~$\complex_2$ as walks in the set of edges obtained by gluing in the previous step.
  \end{enumerate}
  The graph~$H$ is specified by its combinatorial map.  If necessary, we push the image of~$G'$ away from the boundary of~$\surf$, by adding more vertices and edges (at most doubling the size of the graph).  Then we triangulate each face of~$H$.  We obtain a representation $(T,W)$ of the desired drawing of~$G'$ of size $n$ times a polynomial in $c+s+r+p+q$. 
\end{proof}

\section{Conclusions}\label{sec:concl}

\subparagraph{More applications.}
\Cref{thm:main} allows for almost endless combinations for crossing number variants, in terms of drawing styles, ways to count crossings, allowed vertex splits and edge removals, all possibly taking edge colors into account, all on an arbitrary surface.
\Cref{thm:mainrot} additionally makes it possible to restrict the rotation system in a rich subset of the variants.
This shows the versatility of our approach. Altogether, our general framework encompasses most existing crossing number variations, and even more general ones, implying, in a unified way, fixed-parameter tractable algorithms for these crossing number problems in any surface.

\subparagraph{Limitations.}  Nonetheless, our framework requires an upper bound~$r$ on the intersecting size of~$\DD$, and $\DD$ must be preserved by the removal of any vertex or edge.  The first requirement immediately excludes, e.g., the local crossing number (the minimum $t$ such that a given graph has a $t$-planar drawing, which is \NP-complete to compute already for~$t=1$~\cite{DBLP:journals/siamcomp/CabelloM13,DBLP:journals/algorithmica/GrigorievB07}), and the second requirement excludes some recently introduced drawing styles such as, for instance, $1^+$ and $2^+$-real face drawings~\cite{DBLP:journals/access/BinucciBDDHKLMT24}.  It is conceivable that Courcelle-based approaches would be still able to handle $1^+$ and $2^+$-real face drawings, albeit not easily.

Another weakness, in particular compared to the recent preprint by Hamm, Klute, and Parada~\cite{DBLP:journals/corr/abs-2412-13092}, is our quite restricted ability to handle predrawn parts of the input graph, essentially limited to fixing \emph{uncrossable} parts of the graph via rigid subembeddings, and to special restrictions like the fixed rotation system in \Cref{thm:mainrot}.  In contrast, \cite{DBLP:conf/compgeom/HammH22} and \cite{DBLP:journals/corr/abs-2412-13092} are very general in this respect and, in particular, allow crossings of fixed parts with the unfixed rest of the graph.

\subparagraph{Remarks and possible extensions.}   Beyond their use for strongly fan-planar drawings, surfaces with boundaries can be useful for other problems, because they allow to pinpoint specific regions of the surface.  One could actually consider a version with colored boundaries, reinforcing the definition of a stable class by requiring that it is preserved by color-preserving self-homeomorphisms.  Our arguments carry through; this only affects the running time by~a~$b!$ factor, where $b$ is the number of boundary components (see the proof of Lemma~\ref{lem:enum}).

Finally, a possible extension of the \DC{} problem would be to replace the surface~$\surf$ with an arbitrary 2-complex.  We leave open whether such an extension is possible, and also potential applications.

\bibliographystyle{plainurl}
\bibliography{bib-esa.bib}

\begin{thebibliography}{10}

\bibitem{DBLP:journals/combinatorica/AgarwalAPPS97}
Pankaj~K. Agarwal, Boris Aronov, J{\'{a}}nos Pach, Richard Pollack, and Micha
  Sharir.
\newblock Quasi-planar graphs have a linear number of edges.
\newblock {\em Comb.}, 17(1):1--9, 1997.
\newblock \href {https://doi.org/10.1007/BF01196127}
  {\path{doi:10.1007/BF01196127}}.

\bibitem{a-bt-83}
Mark~Anthony Armstrong.
\newblock {\em Basic topology}.
\newblock Undergraduate Texts in Mathematics. Springer-Verlag, 1983.

\bibitem{DBLP:journals/tcs/BaeBCEE0HKMRT18}
Sang~Won Bae, Jean{-}Fran{\c{c}}ois Baffier, Jinhee Chun, Peter Eades, Kord
  Eickmeyer, Luca Grilli, Seok{-}Hee Hong, Matias Korman, Fabrizio
  Montecchiani, Ignaz Rutter, and Csaba~D. T{\'{o}}th.
\newblock Gap-planar graphs.
\newblock {\em Theor. Comput. Sci.}, 745:36--52, 2018.
\newblock \href {https://doi.org/10.1016/j.tcs.2018.05.029}
  {\path{doi:10.1016/j.tcs.2018.05.029}}.

\bibitem{DBLP:conf/gd/BalkoHMOVW24}
Martin Balko, Petr Hlin\v{e}n{\'{y}}, Tom{\'{a}}s Masa\v{r}{\'{\i}}k, Joachim
  Orthaber, Birgit Vogtenhuber, and Mirko~H. Wagner.
\newblock On the uncrossed number of graphs.
\newblock In Stefan Felsner and Karsten Klein, editors, {\em 32nd International
  Symposium on Graph Drawing and Network Visualization, {GD} 2024, September
  18-20, 2024, Vienna, Austria}, volume 320 of {\em LIPIcs}, pages 18:1--18:13.
  Schloss Dagstuhl - Leibniz-Zentrum f{\"{u}}r Informatik, 2024.
\newblock \href {https://doi.org/10.4230/LIPICS.GD.2024.18}
  {\path{doi:10.4230/LIPICS.GD.2024.18}}.

\bibitem{DBLP:journals/access/BinucciBDDHKLMT24}
Carla Binucci, Giuseppe~Di Battista, Walter Didimo, Vida Dujmovic, Seok{-}Hee
  Hong, Michael Kaufmann, Giuseppe Liotta, Pat Morin, and Alessandra Tappini.
\newblock Graphs drawn with some vertices per face: Density and relationships.
\newblock {\em {IEEE} Access}, 12:68828--68846, 2024.
\newblock \href {https://doi.org/10.1109/ACCESS.2024.3401078}
  {\path{doi:10.1109/ACCESS.2024.3401078}}.

\bibitem{DBLP:conf/gd/BinucciBBDDHKLMT23}
Carla Binucci, Aaron B{\"{u}}ngener, Giuseppe~Di Battista, Walter Didimo, Vida
  Dujmovic, Seok{-}Hee Hong, Michael Kaufmann, Giuseppe Liotta, Pat Morin, and
  Alessandra Tappini.
\newblock Min-$k$-planar drawings of graphs.
\newblock In {\em {GD} {(1)}}, volume 14465 of {\em Lecture Notes in Computer
  Science}, pages 39--52. Springer, 2023.
\newblock \href {https://doi.org/10.1007/978-3-031-49272-3\_3}
  {\path{doi:10.1007/978-3-031-49272-3\_3}}.

\bibitem{Cabello13}
Sergio Cabello.
\newblock Hardness of approximation for crossing number.
\newblock {\em Discrete Comput. Geom.}, 49(2):348--358, 2013.
\newblock \href {https://doi.org/10.1007/s00454-012-9440-6}
  {\path{doi:10.1007/s00454-012-9440-6}}.

\bibitem{DBLP:journals/siamcomp/CabelloM13}
Sergio Cabello and Bojan Mohar.
\newblock Adding one edge to planar graphs makes crossing number and
  1-planarity hard.
\newblock {\em {SIAM} J. Comput.}, 42(5):1803--1829, 2013.
\newblock \href {https://doi.org/10.1137/120872310}
  {\path{doi:10.1137/120872310}}.

\bibitem{DBLP:conf/gd/CheongFKPS23}
Otfried Cheong, Henry F{\"{o}}rster, Julia Katheder, Maximilian Pfister, and
  Lena Schlipf.
\newblock Weakly and strongly fan-planar graphs.
\newblock In Michael~A. Bekos and Markus Chimani, editors, {\em Graph Drawing
  and Network Visualization - 31st International Symposium, {GD} 2023, Isola
  delle Femmine, Palermo, Italy, September 20-22, 2023, Revised Selected
  Papers, Part {I}}, volume 14465 of {\em Lecture Notes in Computer Science},
  pages 53--68. Springer, 2023.
\newblock \href {https://doi.org/10.1007/978-3-031-49272-3\_4}
  {\path{doi:10.1007/978-3-031-49272-3\_4}}.

\bibitem{DBLP:conf/stoc/ChuzhoyT22}
Julia Chuzhoy and Zihan Tan.
\newblock A subpolynomial approximation algorithm for graph crossing number in
  low-degree graphs.
\newblock In {\em {STOC}}, pages 303--316. {ACM}, 2022.
\newblock \href {https://doi.org/10.1145/3519935.3519984}
  {\path{doi:10.1145/3519935.3519984}}.

\bibitem{cm-tgis-14}
{\'E}ric Colin~de Verdi{\`e}re and Arnaud de~Mesmay.
\newblock Testing graph isotopy on surfaces.
\newblock {\em Discrete \& Computational Geometry}, 51(1):171--206, 2014.
\newblock \href {https://doi.org/10.1007/s00454-013-9555-4}
  {\path{doi:10.1007/s00454-013-9555-4}}.

\bibitem{cm-faeg2-26}
{\'E}ric Colin~de Verdi{\`e}re and Thomas Magnard.
\newblock An {FPT} algorithm for the embeddability of graphs into
  two-dimensional simplicial complexes.
\newblock {\em SIAM Journal on Computing}, 55(2):410--450, 2026.
\newblock \href {https://doi.org/10.1137/24M1712400}
  {\path{doi:10.1137/24M1712400}}.

\bibitem{cmm-egtds-22}
{\'E}ric Colin~de Verdi{\`e}re, Thomas Magnard, and Bojan Mohar.
\newblock Embedding graphs into two-dimensional simplicial complexes.
\newblock {\em Computing in Geometry and Topology}, 1(1):Article~6, 2022.

\bibitem{Courcelle90}
Bruno Courcelle.
\newblock The monadic second-order logic of graphs. {I}. recognizable sets of
  finite graphs.
\newblock {\em Inf. Comput.}, 85(1):12--75, 1990.
\newblock \href {https://doi.org/10.1016/0890-5401(90)90043-H}
  {\path{doi:10.1016/0890-5401(90)90043-H}}.

\bibitem{DBLP:journals/csur/DidimoLM19}
Walter Didimo, Giuseppe Liotta, and Fabrizio Montecchiani.
\newblock A survey on graph drawing beyond planarity.
\newblock {\em {ACM} Comput. Surv.}, 52(1):4:1--4:37, 2019.
\newblock \href {https://doi.org/10.1145/3301281} {\path{doi:10.1145/3301281}}.

\bibitem{e-dgteg-03}
David Eppstein.
\newblock Dynamic generators of topologically embedded graphs.
\newblock In {\em Proceedings of the 14th Annual ACM-SIAM Symposium on Discrete
  Algorithms (SODA)}, pages 599--608, 2003.
\newblock URL: \url{http://dl.acm.org/citation.cfm?id=644108.644208}.

\bibitem{DBLP:journals/dam/FariaFN01}
Lu{\'{e}}rbio Faria, Celina M.~H. de~Figueiredo, and Candido Ferreira~Xavier
  de~Mendon{\c{c}}a~Neto.
\newblock {Splitting} {Number} is {NP}-complete.
\newblock {\em Discret. Appl. Math.}, 108(1-2):65--83, 2001.

\bibitem{GareyJ83}
Michael~R. Garey and David~S. Johnson.
\newblock Crossing number is {NP-complete}.
\newblock {\em SIAM J. Algebr. Discrete Methods}, 4(3):312--316, September
  1983.

\bibitem{DBLP:journals/ejc/GeelenRS04}
James~F. Geelen, R.~Bruce Richter, and Gelasio Salazar.
\newblock Embedding grids in surfaces.
\newblock {\em Eur. J. Comb.}, 25(6):785--792, 2004.
\newblock \href {https://doi.org/10.1016/J.EJC.2003.07.007}
  {\path{doi:10.1016/J.EJC.2003.07.007}}.

\bibitem{gkst-fivlt-25}
Petr~A. Golovach, Stavros~G. Kolliopoulos, Giannos Stamoulis, and Dimitrios~M.
  Thilikos.
\newblock Finding irrelevant vertices in linear time on bounded-genus graphs.
\newblock In {\em Proceedings of the 36th Annual ACM-SIAM Symposium on Discrete
  Algorithms (SODA)}, pages 3762--3774, 2025.

\bibitem{DBLP:journals/algorithmica/GrigorievB07}
Alexander Grigoriev and Hans~L. Bodlaender.
\newblock Algorithms for graphs embeddable with few crossings per edge.
\newblock {\em Algorithmica}, 49(1):1--11, 2007.
\newblock \href {https://doi.org/10.1007/S00453-007-0010-X}
  {\path{doi:10.1007/S00453-007-0010-X}}.

\bibitem{DBLP:journals/jcss/Grohe04}
Martin Grohe.
\newblock Computing crossing numbers in quadratic time.
\newblock {\em J. Comput. Syst. Sci.}, 68(2):285--302, 2004.
\newblock \href {https://doi.org/10.1016/j.jcss.2003.07.008}
  {\path{doi:10.1016/j.jcss.2003.07.008}}.

\bibitem{DBLP:conf/compgeom/HammH22}
Thekla Hamm and Petr Hlin\v{e}n{\'{y}}.
\newblock Parameterised partially-predrawn crossing number.
\newblock In Xavier Goaoc and Michael Kerber, editors, {\em 38th International
  Symposium on Computational Geometry, SoCG 2022, June 7-10, 2022, Berlin,
  Germany}, volume 224 of {\em LIPIcs}, pages 46:1--46:15. Schloss Dagstuhl -
  Leibniz-Zentrum f{\"{u}}r Informatik, 2022.
\newblock \href {https://doi.org/10.4230/LIPICS.SOCG.2022.46}
  {\path{doi:10.4230/LIPICS.SOCG.2022.46}}.

\bibitem{DBLP:journals/corr/abs-2412-13092}
Thekla Hamm, Fabian Klute, and Irene Parada.
\newblock Computing crossing numbers with topological and geometric
  restrictions.
\newblock {\em CoRR}, abs/2412.13092, 2024.
\newblock \href {https://doi.org/10.48550/ARXIV.2412.13092}
  {\path{doi:10.48550/ARXIV.2412.13092}}.

\bibitem{hjr-sncg-85}
Nora Hartsfield, Brad Jackson, and Gerhard Ringel.
\newblock The splitting number of the complete graph.
\newblock {\em Graphs and Combinatorics}, 1:311--329, 1985.
\newblock \href {https://doi.org/10.1007/BF02582960}
  {\path{doi:10.1007/BF02582960}}.

\bibitem{DBLP:journals/corr/abs-2507-22395}
Kevin Hendrey, Nikolai Karol, and David~R. Wood.
\newblock Structure of \emph{k}-matching-planar graphs.
\newblock {\em CoRR}, abs/2507.22395, 2025.
\newblock URL: \url{https://doi.org/10.48550/arXiv.2507.22395}, \href
  {https://arxiv.org/abs/2507.22395} {\path{arXiv:2507.22395}}, \href
  {https://doi.org/10.48550/ARXIV.2507.22395}
  {\path{doi:10.48550/ARXIV.2507.22395}}.

\bibitem{Hlineny06}
Petr Hlin\v{e}n{\'{y}}.
\newblock Crossing number is hard for cubic graphs.
\newblock {\em Journal of Comb. Theory, Ser. B}, 96(4):455--471, 2006.
\newblock \href {https://doi.org/10.1016/j.jctb.2005.09.009}
  {\path{doi:10.1016/j.jctb.2005.09.009}}.

\bibitem{DBLP:conf/mfcs/Hlineny25}
Petr Hlin\v{e}n{\'{y}}.
\newblock Complexity of anchored crossing number and crossing number of almost
  planar graphs.
\newblock In {\em {MFCS}}, volume 345 of {\em LIPIcs}, pages 59:1--59:17.
  Schloss Dagstuhl - Leibniz-Zentrum f{\"{u}}r Informatik, 2025.
\newblock \href {https://doi.org/10.4230/LIPIcs.MFCS.2025.59}
  {\path{doi:10.4230/LIPIcs.MFCS.2025.59}}.

\bibitem{DBLP:journals/corr/abs-2406-18933}
Petr Hlin\v{e}n{\'{y}} and Liana Khazaliya.
\newblock Crossing number is {NP}-hard for constant path-width (and
  tree-width).
\newblock In Juli{\'{a}}n Mestre and Anthony Wirth, editors, {\em 35th
  International Symposium on Algorithms and Computation, {ISAAC} 2024, December
  8-11, 2024, Sydney, Australia}, volume 322 of {\em LIPIcs}, pages
  40:1--40:15. Schloss Dagstuhl - Leibniz-Zentrum f{\"{u}}r Informatik, 2024.
\newblock \href {https://doi.org/10.4230/LIPICS.ISAAC.2024.40}
  {\path{doi:10.4230/LIPICS.ISAAC.2024.40}}.

\bibitem{DBLP:conf/gd/HlinenyK24}
Petr Hlin\v{e}n{\'{y}} and Csenge~Lili K{\"{o}}dm{\"{o}}n.
\newblock Note on min-$k$-planar drawings of graphs.
\newblock In Stefan Felsner and Karsten Klein, editors, {\em 32nd International
  Symposium on Graph Drawing and Network Visualization, {GD} 2024, September
  18-20, 2024, Vienna, Austria}, volume 320 of {\em LIPIcs}, pages 8:1--8:10.
  Schloss Dagstuhl - Leibniz-Zentrum f{\"{u}}r Informatik, 2024.
\newblock \href {https://doi.org/10.4230/LIPICS.GD.2024.8}
  {\path{doi:10.4230/LIPICS.GD.2024.8}}.

\bibitem{DBLP:conf/isaac/HlinenyS15}
Petr Hlin\v{e}n{\'{y}} and Gelasio Salazar.
\newblock On hardness of the joint crossing number.
\newblock In {\em {ISAAC}}, volume 9472 of {\em Lecture Notes in Computer
  Science}, pages 603--613. Springer, 2015.
\newblock \href {https://doi.org/10.1007/978-3-031-49272-3\_4}
  {\path{doi:10.1007/978-3-031-49272-3\_4}}.

\bibitem{splitsurf}
Brad Jackson and Gerhard Ringel.
\newblock Splittings of graphs on surfaces.
\newblock {\em Proceedings of the 1st Colorado Symposium on Graph Theory},
  pages 203--219, 1982.

\bibitem{jls-nopa-14}
Bart M.~P. Jansen, Daniel Lokshtanov, and Saket Saurabh.
\newblock A near-optimal planarization algorithm.
\newblock In {\em Proceedings of the 25th Annual ACM-SIAM Symposium on Discrete
  Algorithms (SODA)}, pages 1802--1811, 2014.
\newblock \href {https://doi.org/10.1137/1.9781611973402.130}
  {\path{doi:10.1137/1.9781611973402.130}}.

\bibitem{DBLP:journals/combinatorics/0001U22}
Michael Kaufmann and Torsten Ueckerdt.
\newblock The density of fan-planar graphs.
\newblock {\em Electron. J. Comb.}, 29(1), 2022.
\newblock \href {https://doi.org/10.37236/10521} {\path{doi:10.37236/10521}}.

\bibitem{DBLP:conf/stoc/KawarabayashiR07}
Ken{-}ichi Kawarabayashi and Bruce~A. Reed.
\newblock Computing crossing number in linear time.
\newblock In {\em {STOC}}, pages 382--390. {ACM}, 2007.
\newblock \href {https://doi.org/10.1145/1250790.1250848}
  {\path{doi:10.1145/1250790.1250848}}.

\bibitem{l-gem-82}
S{\'o}stenes Lins.
\newblock Graph-encoded maps.
\newblock {\em Journal of Combinatorial Theory, Series~B}, 32:171--181, 1982.
\newblock URL: \url{https://doi.org/10.1016/0095-8956(82)90033-8}.

\bibitem{liu-deletion}
P.C. Liu and R.C. Geldmacher.
\newblock On the deletion of nonplanar edges of a graph.
\newblock In {\em Proceedings of the Tenth Southeastern Conference on
  Combinatorics, Graph Theory and Computing (Florida Atlantic Univ.,Boca Raton,
  Fla., 1979)}, pages 727--738, 1979.

\bibitem{Cr-SODA25}
Daniel Lokshtanov, Fahad Panolan, Saket Saurabh, Roohani Sharma, Jie Xue, and
  Meirav Zehavi.
\newblock Crossing number in slightly superexponential time.
\newblock In {\em Proceedings of the 36th Annual ACM-SIAM Symposium on Discrete
  Algorithms (SODA)}, pages 1412--1424, 2025.
\newblock \href {https://doi.org/10.1137/1.9781611978322.44}
  {\path{doi:10.1137/1.9781611978322.44}}.

\bibitem{mt-gs-01}
Bojan Mohar and Carsten Thomassen.
\newblock {\em Graphs on surfaces}.
\newblock Johns Hopkins Studies in the Mathematical Sciences. Johns Hopkins
  University Press, 2001.

\bibitem{patterncrossing24}
Miriam M{\"{u}}nch and Ignaz Rutter.
\newblock Parameterized algorithms for beyond-planar crossing numbers.
\newblock In Stefan Felsner and Karsten Klein, editors, {\em 32nd International
  Symposium on Graph Drawing and Network Visualization, {GD} 2024, September
  18-20, 2024, Vienna, Austria}, volume 320 of {\em LIPIcs}, pages 25:1--25:16.
  Schloss Dagstuhl - Leibniz-Zentrum f{\"{u}}r Informatik, 2024.
\newblock \href {https://doi.org/10.4230/LIPICS.GD.2024.25}
  {\path{doi:10.4230/LIPICS.GD.2024.25}}.

\bibitem{DBLP:journals/jgt/Negami01}
Seiya Negami.
\newblock Crossing numbers of graph embedding pairs on closed surfaces.
\newblock {\em J. Graph Theory}, 36(1):8--23, 2001.
\newblock \href
  {https://doi.org/10.1002/1097-0118(200101)36:1\%3C8::AID-JGT2\%3E3.0.CO;2-O}
  {\path{doi:10.1002/1097-0118(200101)36:1\%3C8::AID-JGT2\%3E3.0.CO;2-O}}.

\bibitem{DBLP:conf/gd/NollenburgSTVWW22}
Martin N{\"{o}}llenburg, Manuel Sorge, Soeren Terziadis, Ana{\"{\i}}s
  Villedieu, Hsiang{-}Yun Wu, and Jules Wulms.
\newblock Planarizing graphs and their drawings by vertex splitting.
\newblock In {\em {GD}}, volume 13764 of {\em Lecture Notes in Computer
  Science}, pages 232--246. Springer, 2022.
\newblock \href {https://doi.org/10.1007/978-3-031-22203-0\_17}
  {\path{doi:10.1007/978-3-031-22203-0\_17}}.

\bibitem{DBLP:journals/combinatorica/PachT97}
J{\'{a}}nos Pach and G{\'{e}}za T{\'{o}}th.
\newblock Graphs drawn with few crossings per edge.
\newblock {\em Comb.}, 17(3):427--439, 1997.
\newblock \href {https://doi.org/10.1007/BF01215922}
  {\path{doi:10.1007/BF01215922}}.

\bibitem{DBLP:conf/gd/PelsmajerSS07a}
Michael~J. Pelsmajer, Marcus Schaefer, and Daniel {\v S}tefankovi{\v c}.
\newblock Crossing numbers and parameterized complexity.
\newblock In {\em Graph Drawing, 15th International Symposium, {GD} 2007,
  Sydney, Australia, September 24-26, 2007. Revised Papers}, volume 4875 of
  {\em Lecture Notes in Computer Science}, pages 31--36. Springer, 2007.
\newblock \href {https://doi.org/10.1007/978-3-540-77537-9\_6}
  {\path{doi:10.1007/978-3-540-77537-9\_6}}.

\bibitem{DBLP:journals/dcg/PelsmajerSS08}
Michael~J. Pelsmajer, Marcus Schaefer, and Daniel {\v S}tefankovi{\v c}.
\newblock Odd crossing number and crossing number are not the same.
\newblock {\em Discret. Comput. Geom.}, 39(1-3):442--454, 2008.
\newblock \href {https://doi.org/10.1007/S00454-008-9058-X}
  {\path{doi:10.1007/S00454-008-9058-X}}.

\bibitem{DBLP:journals/algorithmica/PelsmajerSS11}
Michael~J. Pelsmajer, Marcus Schaefer, and Daniel {\v S}tefankovi{\v c}.
\newblock Crossing numbers of graphs with rotation systems.
\newblock {\em Algorithmica}, 60(3):679--702, 2011.
\newblock \href {https://doi.org/10.1007/S00453-009-9343-Y}
  {\path{doi:10.1007/S00453-009-9343-Y}}.

\bibitem{schaefer2017crossing}
Marcus Schaefer.
\newblock {\em Crossing Numbers of Graphs}.
\newblock Discrete mathematics and its applications. CRC Press, Taylor \&
  Francis Group, 2017.

\bibitem{DBLP:journals/combinatorics/SchaeferDS21}
Marcus Schaefer.
\newblock The graph crossing number and its variants: A survey.
\newblock {\em Electron. J. Comb.}, Dynamic Surveys, \#DS21, 2024.
\newblock Eighth Edition: May 17, 2024.
\newblock \href {https://doi.org/10.37236/2713} {\path{doi:10.37236/2713}}.

\bibitem{DBLP:journals/jgaa/BeusekomPS22}
Nathan van Beusekom, Irene Parada, and Bettina Speckmann.
\newblock Crossing numbers of beyond-planar graphs revisited.
\newblock {\em J. Graph Algorithms Appl.}, 26(1):149--170, 2022.
\newblock \href {https://doi.org/10.7155/jgaa.00586}
  {\path{doi:10.7155/jgaa.00586}}.

\end{thebibliography}

\end{document}